\documentclass[manuscript,screen,review=false,anonymous=false]{acmart}
\acmJournal{TEAC}
\setcopyright{none}
\acmDOI{}
\acmYear{2026}
\usepackage{mathtools}
\newcommand{\R}{\mathbb R}
\newcommand{\E}{\mathbb E}
\newcommand{\OPT}{\operatorname{OPT}}
\newcommand{\URCM}{\mathrm{URCM}}

\newcommand{\med}{\operatorname{med}}
\newcommand{\clip}{\operatorname{clip}}
\newcommand{\ind}{\mathbf 1}
\newcommand{\norm}[1]{\left\lVert#1\right\rVert}

\newcommand{\ip}[2]{\left\langle#1,#2\right\rangle}
\newcommand{\dd}{\,\mathrm d}

\begin{document}
\newtheorem{remark}[theorem]{Remark}
\title[The Exact Approximation Ratio of Rotated Median]{The Exact Approximation Ratio of Uniformly Rotated Coordinate-wise Median in the Euclidean Plane}
\author{Song Zichen}
\authornote{Sole author and corresponding author. Master's degree in Computer Science.}
\affiliation{%
  \institution{City University of Hong Kong}
  \country{China}
}
\email{72610558@cityu-dg.edu.cn}
\renewcommand{\shortauthors}{Song Zichen}

\begin{abstract}
Uniformly rotated coordinate-wise median chooses a random orthonormal coordinate system, takes a median in each coordinate, and maps the resulting point back to the Euclidean plane. We determine its exact worst-case expected approximation ratio when social cost is the $L_p$ norm of the agents' Euclidean distances and $1<p<2$. The ratio is
\(\frac{2^{2-1/p}}{\pi}\int_0^{\pi/2}(\cos^p\theta+\sin^p\theta)^{1/p}\dd\theta\).
This expression was previously established as a lower bound by Chan, Lin, and Wang; our contribution is the matching upper bound. The proof establishes a strengthened coordinate-wise median inequality relative to an arbitrary reference point. Its right-hand side is linear in a sum of direction-dependent norms, which permits direct averaging over rotations without the loss incurred by passing through a $p$th-moment bound. We give all auxiliary inequalities and a self-contained proof of tightness using the established two-cluster-and-outlier construction. The upper bound holds for every finite profile and every measurable choice within the coordinate median intervals, while odd-size profiles suffice for the matching lower bound. The result characterizes this fixed mechanism, rather than the optimal approximation ratio among all randomized strategyproof mechanisms.
\end{abstract}

\begin{CCSXML}
<ccs2012>
<concept><concept_id>10003752.10003809.10011254</concept_id><concept_desc>Theory of computation~Algorithmic game theory and mechanism design</concept_desc><concept_significance>500</concept_significance></concept>
</ccs2012>
\end{CCSXML}
\ccsdesc[500]{Theory of computation~Algorithmic game theory and mechanism design}
\keywords{facility location, strategyproof mechanisms, coordinate-wise median, random rotation, approximation ratio}
\maketitle

\section{Introduction}

Consider a public facility whose location must be chosen on the basis of its users' reported locations. A user may prefer the facility to be near a home or workplace, but the planner cannot generally verify that preferred location. A rule that minimizes the reported total cost can therefore create an incentive to exaggerate a location. The mechanism-design question is how much efficiency can be retained when the rule must instead make truthful reporting a best response. In this paper, there are no monetary payments and no capacity constraints. The only decision is a point in the plane, and each user's loss is its Euclidean distance from that point.

For a reader accustomed to approximation algorithms, the relevant restriction is on the decision rule rather than its running time. We compare a simple truthful rule with the unconstrained minimum of a social objective. The comparison is made on the true location profile. The rule need not compute that minimum, and the proof need not explain how agents could collectively implement it. The optimum serves as a benchmark for measuring the efficiency lost by using the prescribed mechanism. Likewise, a bad instance in our analysis is a profile of true locations, not an equilibrium generated by coordinated false reports.

Coordinate-wise median is particularly useful in this setting because its incentive property can be understood one coordinate at a time. Its welfare analysis cannot generally be separated in the same way. Euclidean distance combines the two coordinate displacements through a square root, while a coordinate median is characterized by counts of points on either side of a line. Our argument concerns the relationship between those counting constraints and the resulting geometric cost. Random rotation introduces a second issue: the bound must retain enough directional information to be averaged sharply.

Randomizing the axes addresses this dependence while preserving strategyproofness for every realization of the random choice. The resulting rule, uniformly rotated coordinate-wise median (URCM), is the object of this paper. We study social costs
\begin{equation}\label{eq:cost-intro}
 C_p(X;y)=\left(\sum_{i=1}^n\norm{x_i-y}_2^p\right)^{1/p},
 \qquad 1<p<2.
\end{equation}
The parameter $p$ changes the aggregation of individual distances; the underlying distance remains Euclidean. After normalization by $n^{1/p}$, these objectives interpolate between average distance and root-mean-square distance. Increasing the exponent places greater emphasis on large individual distances. The mechanism itself does not depend on $p$.

There are consequently two distinct uses of a norm in this problem. The norm $\norm{x_i-y}_2$ measures a displacement in the physical plane. The exponent $p$ then aggregates the $n$ individual distances into one social cost. For example, if two candidate facility locations produce distance vectors $(1,1)$ and $(0,2)$, their total distances agree, but their $L_p$ social costs are $2^{1/p}$ and $2$, respectively. The second location becomes relatively less attractive for every $p>1$. This elementary comparison explains why a mechanism that is well understood for total distance still requires a new analysis for intermediate exponents.

The exponent should not be interpreted as a change in agents' geometric preferences. Throughout the paper, an agent prefers whichever facility is closer in Euclidean distance. Only the planner's way of evaluating the entire distance vector changes. This distinction allows the incentive argument to remain fixed while the approximation ratio varies with $p$. It also explains why results for non-Euclidean distance functions do not automatically answer the question studied here.

The performance criterion is the expectation of the realized social cost. Because the social objective is nonlinear, this criterion differs from the $p$th root of the expected $p$th power of the cost. That distinction is the source of the analytical gap addressed here.

\subsection{The remaining gap}

\citet{chan2026} establish the following bounds for URCM, with the worst case taken over all finite profiles and all numbers of agents:
\begin{equation}\label{eq:prior-bounds}
 L(p)\le\alpha_p(\URCM)\le U(p),
\end{equation}
where
\begin{align}
 L(p)&=\frac{2^{2-1/p}}{\pi}\int_0^{\pi/2}
       (\cos^p\theta+\sin^p\theta)^{1/p}\dd\theta,
       \label{eq:L}\\
 U(p)&=2\left(\frac{\Gamma((p+1)/2)}
                     {\sqrt\pi\,\Gamma(1+p/2)}\right)^{1/p}.
       \label{eq:U}
\end{align}
Their Theorem~2 expresses the lower bound as an integral over $[0,\pi/4]$; symmetry of the integrand makes it identical to~\eqref{eq:L}. Their conclusion leaves the tight ratio in the interval $1<p<2$ unresolved. The bounds agree at the endpoints but differ strictly in the interior.

We close this gap by proving that the previously known lower bound is an upper bound as well. Thus the integral, and the extremal construction that gives the lower bound, are not new contributions of this paper. The contribution is a deterministic inequality that is strong enough to certify their optimality for this mechanism.

To interpret the two sides of~\eqref{eq:prior-bounds}, it is useful to separate their quantifiers. The lower bound says that some profiles force a ratio arbitrarily close to $L(p)$. It does not say that most profiles behave this way, and it does not bound the cost on a general input from above. The upper bound says that no profile has a ratio exceeding $U(p)$. Identifying the exact ratio requires these two kinds of statements to meet. Merely finding additional instances close to the existing lower bound cannot establish that meeting point.

The Gamma function in~\eqref{eq:U} is a compact expression for an angular moment of a coordinate projection. Its appearance does not signal a probabilistic assumption on the input locations. The profile is arbitrary and fixed; only the coordinate system is random. The integral in~\eqref{eq:L} retains the two coordinate projections together before averaging. We will show that this difference in the order of aggregation is precisely what separates the two formulas. No change to the mechanism, its distribution over angles, or its incentive guarantees is needed.

\subsection{Result and proof idea}

Our main result is
\begin{equation}\label{eq:main-intro}
 \alpha_p(\URCM)=L(p)\qquad\text{for every }1<p<2.
\end{equation}
The upper bound has no restriction on the number of agents, repeated locations, or the geometry of the input. The lower bound is approached by odd-size profiles with two large equal clusters and one distant outlier. Consequently, the equality is a statement about the supremum over finite instances; it does not assert attainment for a prescribed number of agents.

The main technical result concerns a fixed coordinate system. Let $m$ be a coordinate-wise median and let $o$ be any reference point. Set $r_i=\norm{x_i-o}_2$. We prove
\begin{equation}\label{eq:H-intro}
 C_p(X;m)\le
 2^{1-1/p}
 \frac{\sum_i r_i^{p-1}\norm{x_i-o}_p}
      {(\sum_i r_i^p)^{(p-1)/p}}.
\end{equation}
Here $\norm{\cdot}_p$ is the two-dimensional coordinate norm, rather than the social-cost norm on $\R^n$. The reference point need not minimize any objective. Applying~\eqref{eq:H-intro} in each rotated coordinate system and choosing $o$ to minimize~\eqref{eq:cost-intro} gives~\eqref{eq:main-intro} by linearity of expectation.

The proof of~\eqref{eq:H-intro} uses a homogeneous auxiliary function that combines Euclidean length with the coordinate $p$-norm. We bound the distance to $m$ pointwise by this function and two indicator corrections, one for each coordinate half-plane determined by $m$. The corrections disappear after summation because at least half of the agents lie on each required side of a coordinate median. The remaining expression is optimized over one scalar parameter. A comparison between two convexity remainders supplies the boundary estimate needed for the pointwise inequality.

This argument retains the directional norm inside the expectation. A bound on the expected $p$th power of the cost instead produces $U(p)$ after taking a root. The difference can be small numerically: at $p=3/2$, $L(p)\approx1.352524$, whereas $U(p)\approx1.352999$. The purpose of the result is to identify the exact guarantee and the inequality that certifies it.

\subsection{Related work and scope}

The literature on strategic facility location varies along several independent dimensions: the geometry of feasible locations, the number of facilities, the agents' preferences, the social objective, and the information available to the mechanism. The survey of \citet{survey2021} provides a broad account of these distinctions. Our question fixes all of them except the social-cost exponent: there is one unconstrained facility in the Euclidean plane, agents prefer shorter Euclidean distances, and the mechanism receives only their reported locations. This specificity matters when comparing approximation factors. A constant established for a different geometry or a different order of expectation and aggregation need not bound the objective studied here.

\paragraph{Why median mechanisms are a natural starting point.}
The role of medians in strategyproof choice predates approximation analysis. \citet{moulin1980} study strategyproof choice under single-peaked preferences, and \citet{border1983} investigate straightforward elections, unanimity, and phantom voters on a multidimensional preference domain. These works explain why order statistics and generalized medians arise when monetary transfers are unavailable. Their characterization assumptions are part of their conclusions: a result for a specified preference domain should not be read as a characterization of every mechanism under arbitrary spatial preferences.

For spatial location problems, \citet{peters1992} connect Pareto optimality, anonymity, and strategyproofness, including the role of coordinate medians in the Euclidean plane with an odd number of agents. \citet{peters1993} and \citet{vanderstel2000} examine generalized medians and the geometry of strictly convex norms. This line of work supplies the structural background for our mechanism, but its central question differs from ours. A characterization identifies which rules can satisfy incentive and efficiency requirements; an approximation analysis quantifies the social cost of a particular admissible rule. Our reference-point inequality addresses the second question. We do not strengthen these classical characterizations or assert Pareto optimality for every tie convention considered in our analytic bound.

\paragraph{From exact implementation to approximate welfare.}
\citet{procaccia2013} formulate approximate mechanism design without money as a way to measure the welfare loss imposed by incentive constraints. Within that framework, the relevant benchmark is the unconstrained optimum, even if directly choosing that optimum would invite manipulation. Our approximation ratio uses precisely this comparison. The incentive argument for each fixed orientation is elementary; the substantive issue is how much social cost remains after averaging those orientations.

Geometry can change that issue substantially. \citet{schummer2002} study strategyproof location on networks, while \citet{alon2010} analyze strategyproof approximation of the minimax objective on networks. \citet{meir2019} consider three agents on a circle. These works show why a statement about facility location needs both a feasible space and an objective. Their network and circle settings do not supply the planar angular averaging step used below: here a rotation acts on the entire Euclidean profile and preserves every physical distance.

\paragraph{Cost aggregation versus the underlying spatial norm.}
The most relevant objective-level predecessor is \citet{feigenbaum2017}, who study approximation of the $L_p$ norm of agents' costs for facility location on a line. Their formulation makes the aggregation exponent a parameter of the social objective. Our setting retains this interpretation of $p$ but introduces two-dimensional Euclidean geometry. By contrast, \citet{lin2020} study deterministic two-agent mechanisms in $L_p$ space, where the norm describes spatial distance. These are different uses of a norm parameter. In this paper, $p$ never replaces the Euclidean norm in an agent's distance to the facility.

\citet{feldman2013} analyze strategyproof facility location with the least-squares objective on line and tree networks. Their work is relevant to the interaction between nonlinear welfare and randomization, but a bound on expected squared cost cannot simply be converted into an exact bound on expected square-root cost. The expectation is taken outside the realized social-cost norm in our definition. Similarly, \citet{fotakis2016} study concave individual cost functions, a different way to depart from linear distance costs. We keep individual preferences induced by Euclidean distance and vary the aggregate used by the analyst. These distinctions explain why apparently related nonlinear-cost results do not settle the present approximation ratio.

For the coordinate-wise median itself, \citet{goel2023} analyze approximation and optimality within a class of deterministic, anonymous, strategyproof mechanisms in two dimensions. \citet{gravin2025} study approximation guarantees of the median mechanism in higher-dimensional normed spaces. The latter addresses an important geometric extension, but a dimension-dependent or dimension-free guarantee for total distance is not an exact evaluation of the expected $L_p$ social cost of a randomly oriented planar mechanism.

The closest results are the recent analyses of \citet{chan2026,barak2026,hastings2026}. \citet{chan2026} settle deterministic Euclidean-plane coordinate-median ratios across the $L_p$ spectrum and analyze randomized mechanisms, including URCM. Independently, \citet{hastings2026} obtain tight deterministic coordinate-median bounds in two-dimensional $\ell_q$ spaces for general aggregation exponents. With Euclidean distances and $1<p<2$, the deterministic ratio is $\sqrt2$. \citet{barak2026} analyze random rotation for the sum of Euclidean distances, obtaining $4/\pi$ in the plane, and also consider higher-dimensional and prediction-augmented settings. Their two-cluster-and-outlier construction is a precursor of the lower-bound family used here. The lower-bound target for our exponent range is already present in \citet{chan2026}; our claimed contribution is the matching upper bound. We include the lower-bound argument for completeness, without claiming the construction as new.

\paragraph{What randomization and truthfulness mean here.}
The random orientation is drawn independently of the reports, and each realized rule is strategyproof under the fixed order-statistic convention specified below. Thus the mechanism is universally strategyproof. This property concerns unilateral deviations. It should not be conflated with the coalitional requirements considered in the spatial analyses of \citet{bordes2011} or the group-strategyproof characterizations of \citet{tang2020}. Our upper bound does not require a new incentive characterization and does not establish group strategyproofness.

An expected approximation guarantee also leaves open how variable an individual agent's realized outcome can be. \citet{procaccia2018variance} study approximation--variance tradeoffs in facility location games. Their perspective clarifies the interpretation of our result: an exact worst-case expectation describes one performance criterion, rather than a complete risk assessment of the randomized outcome. In particular, our proof does not claim a sharp variance or tail-probability guarantee.

\paragraph{Boundaries and possible extensions of the model.}
The single-facility assumption removes assignment decisions from the mechanism. With two or more facilities, the interaction between placement and the agents' nearest facilities changes both the incentive analysis and the available approximation bounds; see \citet{lu2010} and \citet{fotakis2014}. Capacity constraints introduce another assignment issue, studied by \citet{aziz2020} and \citet{walsh2022}. The half-plane counting conditions exploited here concern one coordinate median of all reports. They do not by themselves yield a truthful capacitated or multiple-facility mechanism.

There are also objectives that evaluate equity more directly than a norm of distances. \citet{cai2016} study minimax envy, and \citet{walsh2025} investigate equitable facility location using criteria including the Gini index and Nash welfare. These are relevant alternatives when the intended application is fairness between agents. Increasing the aggregation exponent changes how large distances influence $C_p$, but an approximation guarantee for $C_p$ is not automatically a guarantee for these other equity criteria. We therefore interpret our theorem as a precise welfare bound for the specified objective.

Finally, predictions can provide information beyond the submitted reports. \citet{agrawal2024} study learning-augmented facility location mechanisms, and \citet{christodoulou2024} investigate mechanism design with output advice. \citet{barak2024mac} allow location predictions that are mostly and approximately correct, while \citet{balkanski2024} study randomized strategic facility location with predictions. These papers motivate asking whether additional information can improve a mechanism's performance. Our guarantee uses no such information and has no prediction-accuracy parameter. Extending the reference-point argument to an advice-dependent choice of orientation would require accounting for that dependence explicitly; the uniform angular average in our proof cannot simply be assumed to persist.

Taken together, these comparisons locate a narrow contribution within a broader literature. We analyze a fixed, already proposed randomized mechanism under a fixed family of objectives. We neither claim optimality among all randomized strategyproof mechanisms nor replace the separate analyses needed for other geometries, assignments, or information models. Section~\ref{sec:model} specifies the benchmark and quantifiers. Section~\ref{sec:median} supplies the deterministic inequality, Section~\ref{sec:rotation} averages it, and Section~\ref{sec:lower} establishes tightness. Section~\ref{sec:comparison} identifies the strict gap between the exact ratio and the earlier moment-based bound.

\subsection{An accessible route through the argument}

The technical part of the paper can be read as a response to three questions. First, which information about a coordinate median can be used without explicitly solving for its location? The answer is the pair of half-plane counting constraints: in each coordinate, at least half the reports are at or beyond the median and at least half are at or below it. Second, how can those constraints produce an estimate that survives averaging over directions? The answer is a pointwise inequality with indicator corrections. Third, why is the resulting average optimal? The answer is an instance where a low-mass outlier determines the median's coordinate choices while contributing negligibly to normalized social cost.

The proof of the upper bound runs in the opposite direction from a direct search for worst-case profiles. We do not characterize every profile that might maximize the ratio. Instead, we build a certificate that applies to every profile at once. Each input point contributes a term to the certificate, and the median property ensures that the extra terms cancel after summation. This is why the analysis can allow an arbitrary number of agents and arbitrary repeated locations without a case distinction for each possible order of the coordinates.

Readers primarily interested in the performance guarantee can first read Theorem~\ref{thm:reference}, take it as a temporary premise, and proceed to Section~\ref{sec:rotation}. That section explains in detail why the form of the deterministic inequality yields the exact integral. Section~\ref{sec:lower} then identifies profiles showing that the average cannot be reduced. Returning to Section~\ref{sec:median} reveals how the deterministic certificate is constructed. Readers interested in the inequality itself can follow the lemmas in their presented order; each supplies a different component of the pointwise estimate.

The proof uses elementary differentiation, norm comparison, convexity, H\"older's inequality, and dominated convergence. The difficulty is in arranging these tools so that their losses are compatible. In particular, the boundary lemma compares an error introduced by the auxiliary function with a compensating convexity gap. Bounding either quantity separately by a coarse constant would discard the dependence needed to match them. The explanatory paragraphs around the lemmas identify what each estimate must retain and where that information is used later.

One limitation should remain visible throughout this route. Sharpness of a particular mechanism is different from optimality within a class of mechanisms. Our lower-bound family is designed to exploit coordinate medians. It does not force the same loss on a mechanism that chooses a facility by a different rule. The conclusion therefore answers the exact-analysis question for URCM and leaves the broader mechanism-design optimization question open.

\section{Model and Main Theorem}\label{sec:model}

\subsection{Locations, objectives, and the benchmark}

The model has one type of strategic input and one source of randomization. Each agent supplies a point, and the mechanism independently samples an orientation. There is no assumed distribution of agents, no prior over their reports, and no randomness in the profile when the approximation ratio is evaluated. We use an ordered list rather than a set so that several agents can have the same preferred location. This convention is important for the lower bound, where large groups of agents are deliberately placed at identical points.

The optimization variable $y$ ranges over the entire plane. The benchmark facility is therefore not required to coincide with a reported point. Nor is it required to be the centroid, a median, or the realization of another truthful rule. If all locations are translated or all are multiplied by a common positive scale, both the mechanism's realized cost and the optimum transform by the same scale. Their ratio is unchanged. These invariances will permit us to move a reference point to the origin during the deterministic proof without changing the underlying question.

A profile is an ordered list $X=(x_1,\ldots,x_n)\in(\R^2)^n$ with $n\ge1$; repetitions are allowed. Agent $i$ has cost $\norm{x_i-y}_2$ when the facility is placed at $y$. We use~\eqref{eq:cost-intro} as the social objective and write $\OPT_p(X)=\min_y C_p(X;y)$. The minimum exists because the objective is continuous and tends to infinity as $\norm y_2\to\infty$. A zero optimum means that all locations coincide, in which case every coordinate median returns the common location.

The normalization by the optimum has a specific role. A large absolute cost can simply reflect that a profile has large diameter. Dividing by the smallest possible cost removes that scale. For a fixed $p$, the desired upper bound must hold even as the number of agents and the diameter vary independently. In particular, we cannot assume that the farthest agent stays within a bounded distance of the others. The lower-bound construction will exploit exactly that freedom.

\subsection{What a coordinate median specifies}

A coordinate median is determined by order rather than by distances between adjacent reports. If a point already lies to the right of the median, moving it farther right can leave the median unchanged even though it changes a distance-based objective substantially. Conversely, an additional point can change which order statistic is selected while representing a vanishing fraction of the agents. Keeping the counting definition explicit makes both the upper-bound cancellation and the lower-bound construction easier to interpret.

For a finite real sample $z_1,\ldots,z_n$, a median is any $a$ satisfying
\begin{equation}\label{eq:median-count}
 \#\{i:z_i\le a\}\ge n/2,
 \qquad \#\{i:z_i\ge a\}\ge n/2.
\end{equation}
For odd $n$ the median is unique. For even $n$ the medians form the closed interval between the two middle order statistics. Unless stated otherwise, the mechanism selects the lower median, the $\lceil n/2\rceil$th order statistic.

For example, the sample $(0,0,1,1)$ has the whole interval $[0,1]$ as its set of medians under~\eqref{eq:median-count}. The lower-median convention selects $0$. The sample $(0,0,1,1,7)$ instead has the unique median $1$. The value $7$ could be replaced by any larger number without changing that median. Our deterministic upper bound uses only the inequalities in~\eqref{eq:median-count}, so it remains valid for every point in a median interval. The mechanism's incentive property requires more care because an input-dependent choice within that interval need not behave like a fixed order statistic.

In two dimensions, the two scalar medians are chosen independently in the same coordinate system. Their pair need not be one of the reported points. It is also not generally the point minimizing the sum of Euclidean distances. The term ``coordinate-wise'' is therefore essential: it specifies a particular location rule, rather than a generic multivariate notion of a median. Later, $\norm{\cdot}_p$ will refer to the coordinate norm in that same fixed system, and it will rotate when the axes rotate.

\subsection{Random orientation and a small example}

An orthonormal coordinate system consists of two perpendicular unit vectors. The coordinate of a location along an axis is its inner product with that vector. Once the scalar medians have been found, multiplying each by its axis vector and adding reconstructs a physical point. The input profile is not randomly perturbed: only the representation used by the decision rule changes. Since a rotation is an isometry, it leaves all Euclidean distances between fixed physical points unchanged.

For $\theta\in[0,2\pi)$, let
\[
 e_\theta=(\cos\theta,\sin\theta),\qquad
 f_\theta=(-\sin\theta,\cos\theta).
\]
The mechanism draws $\Theta$ uniformly, independently of the reports, computes
\[
 a_\theta=\med_i\ip{x_i}{e_\theta},\qquad
 b_\theta=\med_i\ip{x_i}{f_\theta},
\]
and returns $m_\theta=a_\theta e_\theta+b_\theta f_\theta$.

As a concrete illustration, put two agents at $(1,0)$, two at $(0,1)$, and one at $(0,0)$. In the original axes, each coordinate has three zeros and two ones, so the output is $(0,0)$. Its social cost is $4^{1/p}$. In the axes at angle $\pi/4$, the first projections are four copies of $1/\sqrt2$ and one zero, while the second projections are two copies of $-1/\sqrt2$, one zero, and two copies of $1/\sqrt2$. Their medians are $1/\sqrt2$ and $0$, giving physical output $(1/2,1/2)$. All five agents are then at distance $1/\sqrt2$, so the cost is $5^{1/p}/\sqrt2$.

This calculation illustrates axis dependence, not a worst-case analysis. We have not claimed that either output is the optimum, or that the improvement obtained at this particular angle represents the uniform angular average. It does show why randomizing the axes can matter even though distances themselves are rotation invariant. The location selected by the rule changes under the rotated representation; the metric does not.

\subsection{The order of expectation and aggregation}

For each sampled angle, the rule produces an ordinary deterministic location, and that location produces an ordinary $L_p$ social cost. The expectation is then taken over those realized costs. The same angle is used for every agent. One should therefore not replace the mechanism by a procedure that chooses an independent angle for each individual distance. The coupling of the agents through one common output is part of the problem.

The approximation ratio is
\begin{equation}\label{eq:ratio}
 \alpha_p(\URCM)=
 \sup_{n\ge1}\ \sup_{X:\OPT_p(X)>0}
 \frac{\E_\Theta C_p(X;m_\Theta)}{\OPT_p(X)}.
\end{equation}
The numerator is neither $(\E C_p(X;m_\Theta)^p)^{1/p}$ nor the $L_p$ norm of the vector of expected individual distances.

A simple random distance vector separates the three operations. Suppose, only for illustration, that with equal probabilities the vector is $(0,0)$ or $(2,0)$. Its expected $L_p$ norm is $1$, whereas the $p$th root of its expected $p$th-power cost is $2^{1-1/p}>1$. Now consider a vector equally likely to be $(2,0)$ or $(0,2)$. Its realized $L_p$ norm is always $2$, but the $L_p$ norm of the expected vector $(1,1)$ is $2^{1/p}<2$. These examples are statements about aggregation, not asserted facility-location profiles. They explain why an analysis of one expression is not automatically sharp for the others.

The outer supremum in~\eqref{eq:ratio} also deserves attention. An upper bound on it must work for every finite profile. A matching lower bound can be supplied by a sequence whose number of agents tends to infinity; it need not be attained by a single finite profile. All limits below fix $p$ first. The result does not require one convergence estimate that is uniform over all $p$ approaching the endpoints of the interval.

\subsection{Why truthful reporting survives rotation}

We include the standard incentive argument to separate the mechanism's validity from its approximation analysis. Fixing the angle leaves a deterministic coordinate-wise rule. The crucial fact is that, when all other reports are held fixed, a fixed order statistic clips the remaining report to an interval determined by those other reports. Truthful reporting gives the point of that interval closest to the true coordinate. This fact involves neither the social-cost exponent nor the eventual upper bound.

\begin{proposition}[Standard incentive property]\label{prop:sp}
URCM with the lower-median convention is universally strategyproof: truthful reporting minimizes every agent's distance for each fixed realization of the sampled angle.
\end{proposition}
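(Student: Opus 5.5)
We fix the sampled angle $\theta$ and every report other than agent $i$'s. Since $\Theta$ is drawn independently of the reports, universal strategyproofness reduces to a deterministic claim. For each fixed $\theta$, the rule $X\mapsto m_\theta$ must make truthful reporting a best response for every agent, with respect to Euclidean distance.

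The first step is to reduce to coordinates. Because $(e_\theta,f_\theta)$ is orthonormal, any output $m=a e_\theta+b f_\theta$ satisfies
\[
\norm{x_i-m}_2^2=\bigl(\ip{x_i}{e_\theta}-a\bigr)^2+\bigl(\ip{x_i}{f_\theta}-b\bigr)^2 .
\]
Agent $i$'s report $x_i'$ affects $a_\theta$ only through $\ip{x_i'}{e_\theta}$, and $b_\theta$ only through $\ip{x_i'}{f_\theta}$. The two scalar medians are computed independently. It therefore suffices to show the following for each coordinate separately: truthfully reporting $t=\ip{x_i}{e_\theta}$ minimizes $\lvert t-a\rvert$ over all achievable values of $a$, and likewise for the second coordinate. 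A report minimizing both summands simultaneously then minimizes their sum, and the truthful report does both at once.

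The second step, which is the only substantive one, is the clipping lemma for the fixed order statistic $k=\lceil n/2\rceil$. Let $w_{(1)}\le\cdots\le w_{(n-1)}$ be the sorted coordinates of the other agents. Set $\ell=w_{(k-1)}$ and $h=w_{(k)}$, with the conventions $\ell=-\infty$ if $k=1$ and $h=+\infty$ if $k=n$; the latter arises only for $n=1$. We show that the $k$th order statistic of the full sample is $\clip_{[\ell,h]}(z)$ when agent $i$ reports coordinate $z$. The proof is a direct count, in three cases:
\begin{itemize}
\item If $z\le \ell$, then $z$ together with the $k-1$ values $w_{(1)},\dots,w_{(k-1)}$ gives at least $k$ values at most $\ell$. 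At most $k-1$ values lie strictly below $\ell$, since $z$ and the $w_{(j)}$ with $j\le k-2$ are the only candidates. Hence the $k$th smallest value is exactly $\ell$.
\item If $z\ge h$, then $z$ is not among the $k$ smallest values except in ties, and the $k$th smallest value is $h$.
\item If $\ell<z<h$, exactly $k-1$ values lie below $z$, so the $k$th smallest value is $z$.
\end{itemize}

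The third step concludes. The set of achievable coordinate values is $[\ell,h]$. The truthful report yields $\clip_{[\ell,h]}(t)$, which is the point of that interval nearest to $t$. Hence no misreport reduces $\lvert t-a_\theta\rvert$, and the same argument applies to $b_\theta$.

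The main care point is the clipping lemma's handling of ties and of the endpoint conventions for small $n$. It matters that the selected index $k$ is fixed, independent of the reports, so that the interval $[\ell,h]$ depends only on the other agents. This is exactly why the proposition is stated for the lower-median convention rather than for an arbitrary measurable choice within the median interval.
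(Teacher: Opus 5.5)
Your proposal is correct and follows essentially the same route as the paper. You fix the angle and the other reports, show that the lower-median order statistic clips your coordinate to $[w_{(k-1)},w_{(k)}]$, and combine the two coordinatewise minimizations via orthonormality. Your explicit counting also verifies the clipping claim that the paper only asserts, though the case $z\ge h$ is cleaner as the mirror of the first case: $w_{(1)},\dots,w_{(k)}$ give $k$ values at most $h$, and only $w_{(1)},\dots,w_{(k-1)}$ can lie strictly below it, so no appeal to ``except in ties'' is needed.
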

\begin{proof}
Fix an angle and the other agents' reports. In one coordinate, write their sorted reports as $z_{(1)}\le\cdots\le z_{(n-1)}$ and set $k=\lceil n/2\rceil$. Use $z_{(0)}=-\infty$ and $z_{(n)}=+\infty$. The $k$th order statistic after inserting a report $z$ is its projection onto $[z_{(k-1)},z_{(k)}]$. Reporting the true coordinate therefore minimizes the absolute difference between that coordinate and the output over all reports. This applies separately to both coordinates. Truthful reporting minimizes both squared coordinate displacements, hence their sum and its square root. Rotating back preserves Euclidean distances. For $n=1$, the rule simply returns the report and the same conclusion is immediate.
\end{proof}

The proposition is stronger than truthfulness merely in expectation: for every sampled orientation, no agent can gain by changing its report. The randomization can thus be regarded as choosing one rule from a family of truthful rules before applying it. Independence of the orientation from the reports is important for this interpretation. If agents could alter the distribution of the axes by changing their reports, the fixed-angle argument would not by itself establish the same incentive guarantee.

The social objective is absent from the proof of Proposition~\ref{prop:sp} because an agent cares only about its own distance. Changing $p$ changes how the planner compares profiles of distances, not what counts as a profitable deviation for an individual. The analytical task that remains is therefore entirely a worst-case comparison between truthful outcomes and the unconstrained social optimum.

\begin{theorem}[Exact approximation ratio]\label{thm:main}
For every $1<p<2$, $\alpha_p(\URCM)=L(p)$, with $L(p)$ given by~\eqref{eq:L}. The upper bound holds for any measurable rule selecting a value in each coordinate median interval. For each fixed $p$, a sequence of finite odd-size profiles has approximation ratios converging to $L(p)$.
\end{theorem}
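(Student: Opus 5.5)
The plan has two halves: an upper bound obtained by averaging the reference-point inequality~\eqref{eq:H-intro} over rotations, and a lower bound obtained from the two-cluster-and-outlier family.

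\textbf{Upper bound.} First establish the deterministic reference-point inequality~\eqref{eq:H-intro} (Theorem~\ref{thm:reference}) for an arbitrary point $m$ in the coordinate median intervals. By translation invariance, take $o=0$. The core is a pointwise estimate of the form
\[
\norm{x-m}_2^p\le c\,\Phi(x)+\lambda_1(\ind[x_1\ge m_1]-\tfrac12)m_1+\lambda_2(\ind[x_2\ge m_2]-\tfrac12)m_2 .
\]
Here $\Phi$ is a homogeneous auxiliary function built from $\norm{x}_2^{p-1}\norm{x}_p$ and a scalar weight. The indicator corrections have nonpositive sums over the profile, by the counting conditions~\eqref{eq:median-count}. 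Summing over agents and optimizing the scalar weight, via H\"older's inequality (equivalently, AM--GM on the homogeneous terms), produces the right-hand side of~\eqref{eq:H-intro}.

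I expect the pointwise inequality near the boundary of the half-planes to be the main obstacle. There one must compare the error created by replacing $\norm{\cdot}_2$ by the mixed function with the convexity gap of $t\mapsto t^p$. My first attempt is to reduce to one variable along rays and compare the two second-order remainders directly.

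Next, apply~\eqref{eq:H-intro} in the coordinate system at angle $\theta$, taking $o=y^*$, a minimizer of $C_p(X;\cdot)$. The denominator is then $\OPT_p^{p-1}$ and is independent of $\theta$. Taking expectations and using linearity gives
\[
\E\, C_p(X;m_\Theta)\le 2^{1-1/p}\,\frac{\sum_i r_i^{p-1}\,\E\norm{x_i-y^*}_{p,\Theta}}{\OPT_p^{p-1}} .
\]
For each fixed vector $v$, rotation invariance gives
\[
\E\norm{v}_{p,\Theta}=\norm{v}_2\cdot\frac{2}{\pi}\int_0^{\pi/2}(\cos^p\theta+\sin^p\theta)^{1/p}\dd\theta .
\]
Substituting, the numerator becomes $\sum_i r_i^p=\OPT_p^p$ times that constant, so the ratio is at most $L(p)$. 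Measurability of the median selection is used only to make the expectation defined.

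\textbf{Lower bound.} Fix an axis direction. Place $k$ agents at $(1,0)$, $k$ agents at $(0,1)$, and one outlier far out along the negative diagonal, at a distance $R$ with $R$ large but $R^p=o(k)$. The total is $n=2k+1$ agents, an odd number. For each angle $\theta$, the outlier's projections break the tie between the two clusters, and a direct computation shows that the coordinate median equals the projection pair of the cluster points, giving a deterministic point $m_\theta$. The social cost is then $k^{1/p}(\norm{m_\theta-e_1}_2^p+\norm{m_\theta-e_2}_2^p)^{1/p}$ plus a negligible outlier contribution. This should produce the integrand $(\cos^p+\sin^p)^{1/p}$ after rescaling the relative angle.

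On the other side, the optimum is at most the midpoint cost, $(2k)^{1/p}/\sqrt2$ up to lower-order terms. Dividing and letting $k\to\infty$ with $R^p/k\to0$, dominated convergence in $\theta$ gives ratios converging to $L(p)$. The remaining work is checking the median computation for every $\theta$ outside a measure-zero set.
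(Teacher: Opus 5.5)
Your proposal is correct and follows the paper's proof. The upper bound is exactly Theorem~\ref{thm:reference}, applied in every rotated frame with one fixed optimal reference point and averaged termwise via Lemma~\ref{lem:average}. Your lower-bound family is a similarity copy of the paper's $(\pm1,0)$, $(0,R)$ family, because an outlier on the ray $\{(-s,-s):s>0\}$ lies on the perpendicular bisector of the two clusters. The clipped medians then give cluster distances $\lvert\cos\theta-\sin\theta\rvert$ and $\cos\theta+\sin\theta$, which is the paper's integrand with the angle shifted by $\pi/4$ and the scale divided by $\sqrt2$.

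You should state that perpendicular placement explicitly, since ``negative diagonal'' could also be read as the anti-diagonal $y=-x$. With the outlier there, the facility would in the limit sit at a cluster point for almost every angle, and the ratio would tend only to $2^{1-1/p}<L(p)$.
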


The theorem combines two claims that will be proved separately. The universal upper bound is obtained from a deterministic statement about every coordinate median. The lower bound uses profiles with unique medians, so it is unaffected by how even samples are handled. This asymmetry is useful: tie-breaking cannot invalidate the upper estimate, and it cannot remove the instances that force the constant in the supremum.

The theorem also specifies where the substantive progress lies relative to the earlier bounds. The target value is already identified by a known construction. What is missing from a lower-bound construction is a reason that a different, less symmetric profile cannot be worse. The next section supplies that reason in the form of a reference-point inequality. Its statement may initially look more complicated than a constant-factor approximation, but the additional structure is what allows the random orientation to be used exactly.

The broader tie-breaking statement in Theorem~\ref{thm:main} concerns approximation only. It does not extend Proposition~\ref{prop:sp} to arbitrary report-dependent selections within median intervals. Measurability ensures that the expected cost is defined; for a fixed finite profile, all such outputs lie in a bounded set.

\section{A Strengthened Coordinate-wise Median Inequality}\label{sec:median}

We first establish the deterministic estimate that makes the rotation average sharp.

\subsection{What must be preserved before averaging}

For an ordinary approximation bound, it might seem natural to seek a constant $K$ such that $C_p(X;m)\le K C_p(X;o)$ for every reference point $o$. Such a statement would be easy to apply at an optimum, but it would remove the orientation of every vector $x_i-o$ before the random rotation is analyzed. In particular, applying the fixed-axis factor $\sqrt2$ to every orientation and averaging would still give $\sqrt2$. The average cannot improve a bound whose right-hand side is already independent of the direction.

The estimate we need must therefore distinguish vectors of equal Euclidean length that point in different directions in the chosen axes. The coordinate $p$-norm does exactly this when $p<2$. A vector of Euclidean length one along an axis has coordinate $p$-norm one, whereas a vector along a diagonal has coordinate $p$-norm $2^{1/p-1/2}$. These two values become equal at $p=2$, which anticipates why the intermediate-exponent analysis has a different character from the quadratic endpoint.

There is a second requirement. After averaging, the directional contribution of agent $i$ must become a multiple of its $p$th-power distance from the benchmark. The angular average of a coordinate norm is proportional to Euclidean length, so it is natural to multiply that norm by $r_i^{p-1}$. The product has degree $p$ in the distance. Dividing the resulting sum by the benchmark cost to the power $p-1$ returns a quantity of degree one, as a social cost should be. This homogeneity consideration motivates the particular form of the next theorem.

\begin{theorem}[Reference-point inequality]\label{thm:reference}
Fix $1<p<2$ and a coordinate system. Let $m$ be a coordinate-wise median of $X$ and let $o\in\R^2$. Define
\[
 A=\sum_i\norm{x_i-o}_2^p,\qquad
 B=\sum_i\norm{x_i-o}_2^{p-1}\norm{x_i-o}_p.
\]
If $A>0$, then
\begin{equation}\label{eq:reference}
 C_p(X;m)\le 2^{1-1/p}\frac{B}{A^{(p-1)/p}}.
\end{equation}
\end{theorem}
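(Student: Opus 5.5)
The plan is to reduce \eqref{eq:reference} to a pointwise inequality with indicator corrections, sum it over agents, and let the median counting constraints \eqref{eq:median-count} cancel the corrections.

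\textbf{Normalization.} Both sides of \eqref{eq:reference} are invariant under translation, which is applied to the profile, $m$ and $o$ together. By that translation I place $o$ at the origin. Reflecting either coordinate axis preserves $\norm{\cdot}_2$, $\norm{\cdot}_p$ and the median property, so I may assume $m=(a,b)$ with $a,b\ge0$. Write $x_i=(u_i,v_i)$ and $r_i=\norm{x_i}_2$.

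\textbf{Linearization of the right-hand side.} By the weighted AM--GM (Young) inequality, for every $t>0$,
\[
 p\,t^{p-1}B-(p-1)\,t^pA\le \frac{B^p}{A^{p-1}},
\]
with equality at $t=B/A$. Raising \eqref{eq:reference} to the $p$th power, it therefore suffices to find a single $t>0$ for which
\[
 \sum_i\norm{x_i-m}_2^p\le 2^{p-1}\sum_i\Bigl(p\,t^{p-1}r_i^{p-1}\norm{x_i}_p-(p-1)\,t^p r_i^p\Bigr).
\]
Both sides are now sums of per-agent terms. Homogeneity under scaling $x\mapsto sx$, $m\mapsto sm$, $t\mapsto st$ leaves one free scalar to fix. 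I will take $t$ tied to the size of $m$, for example $t$ chosen so that the optimized inequality is tight at a configuration like $x=m$ projected appropriately. The right choice of $t$ is the ``one scalar parameter'' to be optimized at the end.

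\textbf{Pointwise inequality with indicator corrections.} Let $\phi_t(x)=2^{p-1}\bigl(p\,t^{p-1}\norm{x}_2^{p-1}\norm{x}_p-(p-1)t^p\norm{x}_2^p\bigr)$. I aim to prove, for every $x=(u,v)\in\R^2$,
\[
 \norm{x-m}_2^p\le\phi_t(x)+\alpha\bigl(1-2\cdot\ind[u\ge a]\bigr)+\beta\bigl(1-2\cdot\ind[v\ge b]\bigr)
\]
for suitable constants $\alpha,\beta\ge0$ depending on $a,b,t$. Since $\#\{i:u_i\ge a\}\ge n/2$ and $\#\{i:v_i\ge b\}\ge n/2$, the correction terms sum to something nonpositive over the profile. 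Summing the pointwise inequality then gives the desired bound directly, for any $m$ in the median intervals, so tie-breaking plays no role.

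To find $\alpha,\beta$, I compare the inequality on the region $u\ge a,\ v\ge b$, where both corrections equal $-\alpha$ and $-\beta$, with the complementary regions. Far from $m$ the comparison should follow from $\norm{x-m}_2\le\norm{x}_2+\norm{m}_2$ together with $\norm{x}_p\ge\norm{x}_2$, using homogeneity. Inside a bounded region I will use monotonicity in each coordinate and differentiate along rays to reduce to the boundary lines $u=a$ and $v=b$.

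\textbf{Main obstacle.} The hardest part will be the boundary estimate on these lines, where the indicator jumps. There the loss from replacing $\norm{x-m}_2^p$ by the auxiliary function $\phi_t$ must be paid by the convexity gap of $s\mapsto s^p$, the same gap that produced the AM--GM slack. My first attempt is to write both quantities as second-order Taylor remainders in the distance variable and compare them directly, for instance via $2^{p-1}(s^p+w^p)\ge(s+w)^p$ type convexity combined with the comparison $\norm{x}_p\le2^{1/p-1/2}\norm{x}_2$. Bounding either remainder separately by a constant would lose the matching, so the comparison has to keep their common dependence on the point.

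Once the boundary case holds, the interior cases follow by monotonicity. Choosing $t$ optimally then closes the argument, and the case $A>0$ ensures the final division is valid.
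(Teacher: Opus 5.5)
Your reduction is the paper's: a degree-$p$ tangent surrogate, half-plane corrections cancelled by the median counts, and Young's inequality bounding the summed surrogate by $B^p/A^{p-1}$. That part is correct. The gap is that the pointwise inequality, which carries the whole difficulty, is neither specified nor proved, and your description of the parameter is wrong. For fixed $t$, $\phi_t$ is already homogeneous of degree $p$. So $t$ is scale-free: the substitution $t\mapsto st$ is not a symmetry, and tying $t$ to the size of $m$ would break homogeneity. Nor is $t$ ``optimized at the end''. It must be a value for which the pointwise inequality holds, and the maximization over $t$ enters only as the Young upper bound. The choice that works is $t=\lambda:=\norm{m}_p/\norm{m}_2$, determined by the \emph{direction} of $m$. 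This makes the surrogate tangent at the median, $\phi_\lambda(m)=2^{p-1}\norm{m}_p^p$, and it goes with $\alpha=2^{p-1}a^p$ and $\beta=2^{p-1}b^p$. A zero coordinate of $m$ needs separate handling.

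You correctly flag the boundary case as a matching of two second-order remainders, but norm equivalence and the power-mean inequality cannot close it. Put $r_w=\norm{(a,w)}_2$ and $q_w=\norm{(a,w)}_p/r_w$. At $x=(a,-w)$ with $w\ge0$, only the first correction is active. By the sign symmetry of $\phi_\lambda$, the target there is equivalent to $D(w)\le J(w)$, where
\[
 D(w)=r_w^p\bigl[q_w^p-q_b^p-p\,q_b^{p-1}(q_w-q_b)\bigr],\qquad
 J(w)=w^p+b^p-2^{1-p}(w+b)^p .
\]
Equality holds at $w=b$, the reflected median $(a,-b)$, and both sides vanish to second order there, so $O(1)$ bounds are useless. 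The missing ingredient is the sharp variation estimate $r_w^{p/2}|q_w-q_b|\le|w-b|/\bigl(\sqrt2\,(w+b)^{(2-p)/2}\bigr)$. It gives $D\le\frac{p(p-1)}4(w+b)^{p-2}(w-b)^2$, which is exactly the strong-convexity lower bound for $J$. The paper proves it by comparing $q$ with the $\ell_1$-to-$\ell_2$ ratio and then interpolating two trigonometric estimates. The comparison gains a factor $2-p$, without which the constant is too large.

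The other regions also need real work.
\begin{itemize}
\item Extending off the boundary goes along coordinate lines, not rays. It requires $\partial_j\phi_\lambda(x)\ge p\,x_j^{p-1}$ for all $x$ with $x_j\ge0$, whether $\norm{x}_p/\norm{x}_2$ lies above or below $\lambda$. One derivative term changes sign, and controlling it uses $\lambda\le2^{1/p-1/2}$. It also requires $(z-a)\norm{x-m}_2^{p-2}\le(z-a)^{p-1}$.
\item The quadrant $u<a$, $v<b$ contains points near $m$ and points with arbitrarily negative coordinates, so ``far from $m$'' does not suffice. Your triangle-inequality idea must yield $\norm{x-m}_2^p\le\phi_\lambda(x)+2^{p-1}\norm{m}_p^p$ on all of $\R^2$. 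The paper obtains this via a weighted H\"older step with $c=\lambda^{p-1}(p-(p-1)\lambda)$, where $c^{-1/(p-1)}+\lambda^{-p/(p-1)}\le2$ holds because $\lambda\le\sqrt2$.
\end{itemize}
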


All statements in this section use a fixed $p\in(1,2)$.

To understand the right-hand side of~\eqref{eq:reference}, consider a nonzero vector $x_i-o$ and write its contribution to $B$ as $r_i^p q(x_i-o)$. The ratio $B/A$ is then an average of the directional norm ratios, with weights proportional to the agents' $p$th-power distances from $o$. Distant agents receive more weight in this average because they also contribute more to the benchmark objective. These weights are analytical quantities, not voting weights assigned to the median rule. The median remains the ordinary unweighted coordinate median.

The arbitrary-reference-point formulation has a practical role in the proof. It allows us to translate $o$ to the origin without using first-order optimality conditions for $C_p$. We can then reflect individual axes to place the median in the nonnegative quadrant. The benchmark will be chosen to be an optimum only when the deterministic inequality is applied to the mechanism. Keeping these choices separate avoids making the auxiliary estimates depend on the particular equation satisfied by a minimizer of the social cost.
 Put
\[
 t=p-1,\qquad u=2-p=1-t,\qquad Q=2^{1/p-1/2}.
\]
For $x\in\R^2$, write $r(x)=\norm x_2$, $s(x)=\norm x_p$, and $q(x)=s(x)/r(x)$ when $x\ne0$. Norm comparison gives $1\le q(x)\le Q\le\sqrt2$.

The parameters $t$ and $u$ record two complementary features of the exponent. The derivative of a $p$th power involves $p-1=t$, while the change in a Euclidean distance power involves $p-2=-u$. Their sum is one, which will let two elementary bounds be interpolated multiplicatively in Lemma~\ref{lem:variation}. The upper limit $Q$ on $q(x)$ is the exact two-dimensional norm-comparison factor. In this section, $q(x)$ is a ratio of norms at a vector; it is not a new exponent or a change in the underlying distance.

The auxiliary scalar $\lambda$ will ultimately be chosen from the direction of the median itself. Until then, keeping it free allows us to prove a gradient estimate that works uniformly over the entire permitted interval. The following function combines the directional quantity $r^t s$ with the radial quantity $r^p$. The negative term is deliberate: after summation, it produces a one-variable expression whose maximum can be evaluated exactly.

For $\lambda\in[1,Q]$, define
\begin{equation}\label{eq:phi}
 \Phi_\lambda(x)=p\lambda^t r(x)^t s(x)-t\lambda^p r(x)^p,
 \qquad\Phi_\lambda(0)=0.
\end{equation}
This function is invariant under coordinate sign changes and is homogeneous of degree $p$. It is continuously differentiable away from the origin. At the origin its derivative is zero: the function is $O(r^p)$ and its gradient away from zero is $O(r^{p-1})$. Thus it is continuously differentiable everywhere.

There is a useful way to see where this function comes from. If $x\ne0$, then $\Phi_\lambda(x)/r(x)^p=p\lambda^{p-1}q(x)-(p-1)\lambda^p$. As a function of $q(x)$, this is the tangent line to $q^p$ at $q=\lambda$. Convexity therefore places it below $q(x)^p$. Multiplying by $r(x)^p$ says that $\Phi_\lambda(x)$ is a tangent-based surrogate for $s(x)^p$. When $\lambda=q(m)$, the surrogate is exact at the median: $\Phi_\lambda(m)=s(m)^p$.

This tangency explains both the opportunity and the difficulty. Replacing $s(x)^p$ by a smaller expression is useful if it leads to the desired mixed norm after summing, but the replacement introduces an error. The next two lemmas show that this error is controlled on the boundary of a median half-plane. They compare how quickly the norm ratio changes with a convexity gap in the coordinate that remains free. Without that comparison, the tangent surrogate would have no reason to satisfy the needed distance inequality.

We do not assume that $\Phi_\lambda$ is globally convex, or that a coordinate median minimizes its sum. Neither assertion is needed. The argument instead establishes three explicit facts about this function: a bound on a coordinate boundary, a lower bound on derivatives in nonnegative coordinate directions, and a coarse estimate valid on the whole plane. Those facts are then combined according to the position of a point relative to the median.

The first lemma controls how the ratio of the two norms changes along a coordinate line. Its constant need not be optimal; the stated estimate suffices to compare the two convexity remainders in Lemma~\ref{lem:boundary}.

\subsection{Variation of the norm ratio}

Fixing the first coordinate at $a>0$ leaves one nonnegative coordinate free. The quantity $q_v$ compares the coordinate $p$-norm of $(a,v)$ with its Euclidean length. It equals one near an axis and is largest near the diagonal. Consequently it is not monotone over the whole half-line in $v$. A direct global derivative bound would have to accommodate that change of direction. The proof below first uses the symmetry that exchanges the two coordinates, then works on the interval where the ratio is monotone.

The factors in~\eqref{eq:variation} are chosen for the next lemma rather than for their appearance in isolation. Squaring the estimate gives a bound on $r_v^p(q_v-q_b)^2$. This is exactly the form of the Taylor remainder introduced by the tangent surrogate. The right-hand side becomes a multiple of $(v+b)^{p-2}(v-b)^2$, which will also occur in a lower bound on the midpoint convexity gap of $v^p$. Thus the lemma aligns two different errors so that one can pay for the other.

\begin{lemma}\label{lem:variation}
Let $a>0$ and $v,b\ge0$. Write
\[
 r_v=(a^2+v^2)^{1/2},\qquad
 q_v=\frac{(a^p+v^p)^{1/p}}{r_v},
\]
and define $q_b$ in the same way. If $v+b>0$, then
\begin{equation}\label{eq:variation}
 r_v^{p/2}|q_v-q_b|
 \le\frac{|v-b|}{\sqrt2\,(v+b)^{u/2}}.
\end{equation}
\end{lemma}
\begin{proof}
Both sides scale by $a^{p/2}$ when $v,b$ are rescaled relative to $a$, so assume $a=1$.

More explicitly, write $v=a\widehat v$ and $b=a\widehat b$. The norm ratio does not change under this common scaling, while $r_v^{p/2}$ gains a factor $a^{p/2}$. The other side gains $a^{1-u/2}=a^{p/2}$. This verifies that setting $a=1$ loses no cases. We will restore the factor only at the end, so all intermediate trigonometric expressions involve ordinary dimensionless parameters.
 Define
\[
 q_p(z)=\frac{(1+z^p)^{1/p}}{\sqrt{1+z^2}},\qquad
 q_1(z)=\frac{1+z}{\sqrt{1+z^2}}.
\]
For $0<z\le1$, differentiation gives
\[
 q_p'(z)=\frac{z^{p-1}-z}
 {(1+z^2)^{3/2}(1+z^p)^{(p-1)/p}}.
\]
The convexity of $h\mapsto z^h$ yields $z^t\le1-t+tz$. Since $z^t\ge z$ and $(1+z^p)^{t/p}\ge1$, it follows that
\[
 0\le q_p'(z)\le u\frac{1-z}{(1+z^2)^{3/2}}=u q_1'(z).
\]
The inequality extends to $z=0$ by a one-sided limit. Both functions satisfy $q_j(z)=q_j(1/z)$ for $z>0$. Fold each argument into $[0,1]$ using $z\mapsto\min\{z,1/z\}$, with zero mapped to zero. Integrating the derivative inequality between the folded arguments gives
\begin{equation}\label{eq:contraction}
 |q_p(v)-q_p(b)|\le u|q_1(v)-q_1(b)|
 \quad(v,b\ge0).
\end{equation}

The comparison with $q_1$ is useful because $q_1$ has a simple trigonometric form. Folding does not replace a difference by a sum of two errors. Both functions take the same value at $z$ and $1/z$, so their endpoint differences are preserved exactly when each argument is folded. On the resulting interval inside $[0,1]$, the derivative inequality can be integrated directly between the two endpoints, in whichever order makes them increasing. This is why the absolute-value estimate remains valid even when $v$ and $b$ lie on different sides of one.

The claim is immediate if $v=b$. Otherwise put $v=\tan\theta$, $b=\tan\varphi$, where $\theta,\varphi\in[0,\pi/2)$, and let $M=(\theta+\varphi)/2$, $D=(\theta-\varphi)/2$. Using $q_1(\tan\theta)=\cos\theta+\sin\theta$ gives
\begin{equation}\label{eq:trig-difference}
 d:=\frac{|q_1(v)-q_1(b)|}{|v-b|}
 =\frac{|\cos M-\sin M|}{r_vr_b\cos D}.
\end{equation}

The tangent substitution is a geometric parametrization of a vector with first coordinate one. In particular, $r_v=1/\cos\theta$ and $r_b=1/\cos\varphi$. The midpoint angle $M$ measures the common orientation of the two vectors, and the half-difference $D$ measures their separation. The identity~\eqref{eq:trig-difference} follows by canceling the common factor $2\sin D$ in the two endpoint differences. When $v\ne b$, this factor is nonzero. The case of equal endpoints was separated precisely so that this division is legitimate.

One estimate on $d$ is not quite in the form required by~\eqref{eq:variation}. We therefore form two scaled versions, each with a simple bound. Their powers $u$ and $t$ will recover the exact exponents needed in the lemma. This step is an interpolation of two explicit scalar estimates, not an invocation of an abstract interpolation theorem.

Set $K_1=\sqrt{r_v(v+b)}\,d$ and $K_2=r_vd$. Since $v+b=\sin(2M)r_vr_b$,
\[
 K_1^2=\frac{\sin(2M)(1-\sin(2M))}{r_b\cos^2D}
 \le\frac{1/4}{1/2}=\frac12.
\]
Here $r_b\ge1$, $|D|\le\pi/4$, and $z(1-z)\le1/4$ for $z\in[0,1]$. Also $|D|\le M\le\pi/2-|D|$, so
\[
 |\cos M-\sin M|\le\cos|D|-\sin|D|\le\cos D.
\]
Equation~\eqref{eq:trig-difference} therefore implies $K_2\le1/r_b\le1$.

The bound on $K_1$ uses the product $\sin(2M)(1-\sin(2M))$, whose two factors cannot both be large. Its maximum is $1/4$, while the denominator is at least $1/2$. The estimate on $K_2$ instead uses the allowed interval for $M$ at a given $D$. On that interval, $|\cos M-\sin M|$ is largest at an endpoint, where it equals $\cos|D|-\sin|D|$. These are complementary estimates: the first carries the additional scale $v+b$, while the second supplies the missing power of $r_v$.

Multiplying the two estimates with exponents $u$ and $t$ is legitimate because both exponents are positive and sum to one. Explicitly, $K_1^uK_2^t=r_v^{u/2+t}(v+b)^{u/2}d^{u+t}$. Thus it equals $r_v^{p/2}(v+b)^{u/2}d$, exactly the scale left after applying the contraction estimate. The remaining factor $u$ is retained from that contraction. Dropping it would give $2^{-u/2}$, which is too large to supply the uniform coefficient required for the boundary comparison.

As $u+t=1$ and $u/2+t=p/2$, equation~\eqref{eq:contraction} now yields
\[
 r_v^{p/2}(v+b)^{u/2}
 \frac{|q_p(v)-q_p(b)|}{|v-b|}
 \le uK_1^uK_2^t\le u2^{-u/2}\le2^{-1/2}.
\]
The last inequality follows because $h(u)=u2^{-u/2}$ is increasing on $(0,1]$: its derivative is $2^{-u/2}(1-u\log2/2)>0$. Restoring the scale proves~\eqref{eq:variation}.
\end{proof}

The conclusion of Lemma~\ref{lem:variation} is asymmetric in $v$ and $b$ because the Taylor error in the next lemma is multiplied by the Euclidean length at the evaluation point $v$. No symmetry of that prefactor is needed. The estimate is also deliberately stronger than a mere continuity statement: it has the precise quadratic scale after squaring. At $v=b$ both the norm-ratio error and the compensating midpoint gap vanish, so preserving their rates of vanishing is essential.

The proof exposes where the restriction $p<2$ first enters quantitatively. It supplies the positive factor $u=2-p$ in the contraction toward $q_1$. As $p$ approaches two, the ratio of the coordinate $p$-norm to Euclidean length becomes nearly constant, and its variation correspondingly shrinks. The next lemma translates this directional regularity into a bound involving actual distances to the median.

\subsection{Boundary and gradient estimates}

Suppose the median has nonnegative coordinates $(a,b)$ and a point has first coordinate exactly $a$. Its displacement from the median is then entirely in the second coordinate. We need an estimate at this boundary before moving the point farther to the right. The parameter $\lambda=q(a,b)$ makes the auxiliary function exact at the median, so the estimate should also be exact when the free coordinate equals $b$.

The term $a^p-b^p$ in the next statement may appear unusual if it is viewed only as a bound on a function. It is dictated by the later indicator correction. Crossing the first median threshold will remove $2a^p$ from the otherwise positive constant $a^p+b^p$. Rearranging the desired pointwise estimate then requires exactly $a^p-b^p$ on the right. The boundary lemma is designed to produce that term, together with enough room to bound the distance in the second coordinate.

\begin{lemma}[Boundary estimate]\label{lem:boundary}
For $a>0$, $b,v\ge0$, and $\lambda=q(a,b)$,
\begin{equation}\label{eq:boundary}
 \Phi_\lambda(a,v)\ge a^p-b^p+2^{1-p}(v+b)^p.
\end{equation}
\end{lemma}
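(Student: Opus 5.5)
The plan is to read $\Phi_\lambda$ as $s^p$ minus a Taylor remainder, bound that remainder with Lemma~\ref{lem:variation}, and pay for it with the midpoint convexity gap of $z\mapsto z^p$.

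\textbf{Step 1: rewrite $\Phi_\lambda$.} Write $x=(a,v)$, so $r_v=r(x)$ and $q_v=q(x)$. Note that $\lambda=q_b$ in the notation of Lemma~\ref{lem:variation}. As observed after~\eqref{eq:phi}, $\Phi_\lambda(x)/r_v^p=p\lambda^tq_v-t\lambda^p$ is the tangent line to $g(q)=q^p$ at $q=\lambda$. Hence
\[
\Phi_\lambda(a,v)=s(x)^p-r_v^pR,\qquad R=g(q_v)-g(\lambda)-g'(\lambda)(q_v-\lambda)\ge0 .
\]
Since $s(x)^p=a^p+v^p$, the claim~\eqref{eq:boundary} is equivalent to
\[
r_v^pR\le v^p+b^p-2^{1-p}(v+b)^p .
\]

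\textbf{Step 2: bound the remainder.} By Taylor's theorem, $R=\tfrac12 g''(\xi)(q_v-\lambda)^2$ for some $\xi$ between $q_v$ and $\lambda$. Both of these lie in $[1,Q]$, so $\xi\ge1$. Because $g''(\xi)=pt\,\xi^{p-2}$ and $p-2<0$, we get $g''(\xi)\le pt$. Squaring~\eqref{eq:variation}, valid when $v+b>0$, then gives
\[
r_v^pR\le\frac{pt}{2}\,r_v^p(q_v-q_b)^2\le\frac{pt}{4}\,(v-b)^2(v+b)^{-u}.
\]

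\textbf{Step 3: lower-bound the convexity gap.} Put $c=(v+b)/2$ and $h=(v-b)/2$. The right side of Step 1 is $f(c+h)+f(c-h)-2f(c)$ with $f(z)=z^p$. Writing it as $\int_0^{|h|}\int_{-w}^{w}f''(c+y)\dd y\dd w$, we need a lower bound on $f''$ over $[c-|h|,c+|h|]$. There $f''(z)=pt\,z^{p-2}$ is decreasing, and $z\le c+|h|=\max(v,b)\le v+b$. So $f''\ge pt(v+b)^{p-2}$ on the interval, and the gap is at least
\[
pt(v+b)^{p-2}h^2=\frac{pt}{4}(v-b)^2(v+b)^{-u}.
\]
This is exactly the bound from Step 2, which proves~\eqref{eq:boundary}. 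Lemma~\ref{lem:variation} was built so that these two constants line up.

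\textbf{Step 4: the degenerate case.} If $v=b=0$, then $\lambda=q(a,0)=1$ and $\Phi_1(a,0)=pa^p-ta^p=a^p$. This equals the right side of~\eqref{eq:boundary}.

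\textbf{Main obstacle.} The delicate point is matching the constants in Steps 2 and 3. The gap estimate must keep the full factor $pt/4$ and the exponent $p-2$ evaluated at $v+b$, with no slack. Likewise, the Taylor bound must use $\xi\ge1$, which holds because both $q_v$ and $\lambda$ lie in $[1,Q]$. Any coarser estimate on either side would break the comparison.
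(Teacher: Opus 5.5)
Your proof is correct and follows the paper's argument. It uses the same decomposition $\Phi_\lambda(a,v)=a^p+v^p-r_v^pR$ and the same Taylor bound via $f''\le p(p-1)$ on $[1,\infty)$ together with Lemma~\ref{lem:variation}; the remainder is paid for by the midpoint gap $v^p+b^p-2^{1-p}(v+b)^p$ with curvature lower bound $p(p-1)(v+b)^{p-2}$, and $v=b=0$ is checked separately. The only difference is minor: you bound the gap with the double-integral form of the second difference, whereas the paper applies the midpoint inequality to $f-\kappa z^2/2$, and your form handles a zero endpoint directly where the paper takes a limit from positive endpoints. You should state explicitly why that direct treatment is valid, namely that $z^{p-2}$ is integrable near $0$.
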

\begin{proof}
Use $r_v,q_v,q_b$ from Lemma~\ref{lem:variation}. Since $\lambda=q_b$,
\begin{align*}
 D&:=(a^p+v^p)-\Phi_\lambda(a,v)\\
  &=r_v^p\bigl[q_v^p-q_b^p-pq_b^{p-1}(q_v-q_b)\bigr].
\end{align*}
For $f(z)=z^p$ on $z\ge1$, $0<f''(z)\le p(p-1)$. Taylor's theorem, applied between $q_v,q_b\ge1$, and Lemma~\ref{lem:variation} show that
\begin{equation}\label{eq:D}
 0\le D\le\frac{p(p-1)}2r_v^p(q_v-q_b)^2
 \le\frac{p(p-1)}4(v+b)^{p-2}(v-b)^2.
\end{equation}

Here $D$ is the price of replacing the coordinate $p$th power by its tangent surrogate. Convexity ensures that this price is nonnegative. To bound it from above, we use an upper bound on the second derivative over the interval between the two norm ratios. That interval lies above one, where the negative exponent $p-2$ makes $z^{p-2}$ at most one. The variation lemma then converts the error from a difference of norm ratios into a difference of the original coordinates.

We next look for a nonnegative quantity that is at least as large as $D$. The midpoint gap $J$ is natural because the desired distance term involves the sum $v+b$. Rewriting $2^{1-p}(v+b)^p$ as twice the $p$th power of the midpoint makes $J$ the usual two-point convexity gap. Unlike the preceding Taylor estimate, this part of the argument needs a lower bound on curvature. Both estimates are valid because they apply on different intervals: the norm-ratio interval is bounded below by one, whereas the coordinate interval is bounded above by $v+b$.

On the other hand, let $J=v^p+b^p-2^{1-p}(v+b)^p$. If $v,b>0$, then throughout the interval with endpoints $v,b$,
\[
 f''(z)\ge\kappa:=p(p-1)(v+b)^{p-2}.
\]
The midpoint inequality for the convex function $f(z)-\kappa z^2/2$ gives
\begin{equation}\label{eq:J}
 J\ge\frac{\kappa}{4}(v-b)^2.
\end{equation}

The factor $1/4$ can be checked without recalling a specialized strong-convexity formula. Put $w=(v+b)/2$ and $g(z)=f(z)-\kappa z^2/2$. Convexity gives $g(v)+g(b)-2g(w)\ge0$. On returning the quadratic terms to the other side, their contribution is $\kappa(v^2+b^2-2w^2)/2=\kappa(v-b)^2/4$. This is exactly the coefficient in~\eqref{eq:J}. Its agreement with the final coefficient in~\eqref{eq:D} is the reason the comparison closes without a residual error.

The curvature lower bound uses the largest convenient coordinate scale, $v+b$, rather than the smaller endpoint. Since $p-2$ is negative, $z\le v+b$ gives $z^{p-2}\ge(v+b)^{p-2}$. The curvature may become unbounded near zero, but this does not invalidate a lower bound. We first apply the argument on an interval with positive endpoints, where differentiation is ordinary, and then use continuity of the original expressions to include a zero endpoint. No finite value of $f''(0)$ is asserted.

When exactly one endpoint is zero, take a limit from positive endpoints in~\eqref{eq:D}--\eqref{eq:J}. If $v=b=0$, equation~\eqref{eq:boundary} holds directly with $\lambda=1$. In every case $D\le J$, and hence
\[
 \Phi_\lambda(a,v)\ge a^p+v^p-J
 =a^p-b^p+2^{1-p}(v+b)^p.
\qedhere
\]
\end{proof}

The boundary estimate has two useful checks. If $v=b$, the midpoint gap and tangent error both vanish and the displayed inequality is an equality. If $v=b=0$, the norm ratio is one and the auxiliary expression reduces directly to $a^p$. These checks do not prove the general statement, but they show why neither a division by $v-b$ nor a finite second derivative at zero should appear in its final formulation. The proof uses limits only at the coordinate endpoints where such care is necessary.

We now need to extend information from a boundary line into an entire region. For that purpose, a lower bound on the coordinate derivatives of $\Phi_\lambda$ is more useful than a new bound on its values. The estimate below compares the derivative with that of the separable function $2^{1-p}(x_1^p+x_2^p)$ in nonnegative coordinate directions. When integrated along a path, it will also control the region where both coordinates lie beyond the median.

\begin{lemma}[Coordinate gradient bound]\label{lem:gradient}
For $\lambda\in[1,Q]$ and $x_j\ge0$,
\begin{equation}\label{eq:gradient}
 \partial_j\Phi_\lambda(x)\ge p2^{1-p}x_j^{p-1}.
\end{equation}
\end{lemma}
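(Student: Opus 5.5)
The plan is to compute $\partial_j\Phi_\lambda$ explicitly and reduce \eqref{eq:gradient} to a scalar inequality in $\lambda$, the norm ratio $q=q(x)$, and the direction cosine $y=x_j/r(x)\in[0,1]$. By the sign-change invariance and homogeneity already noted, it suffices to treat $x\ne0$ with $x_j>0$. The case $x_j=0$ then follows from continuity of the gradient, since the right-hand side vanishes there. At the origin the derivative is zero, so the claim is trivial.

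For $x\ne0$ and $x_j>0$ we have $\partial_j r=x_j/r$ and $\partial_j s=x_j^{p-1}s^{1-p}$. Differentiating \eqref{eq:phi} and writing $s=qr$ gives
\[
 \partial_j\Phi_\lambda(x)=p\lambda^t r^t s^{1-p}x_j^{p-1}+pt\lambda^t r^{t-2}x_j s-pt\lambda^p r^{p-2}x_j .
\]
Using $r^ts^{1-p}=q^{-t}$ and $x_jr^{p-2}=x_j^{t}y^{u}$, this becomes
\[
 \partial_j\Phi_\lambda(x)=p\,x_j^{p-1}\Bigl[(\lambda/q)^t+t\,y^{u}\lambda^t(q-\lambda)\Bigr].
\]
So \eqref{eq:gradient} is equivalent to showing that the bracket is at least $2^{-t}$ for all $\lambda,q\in[1,Q]$ and $y\in[0,1]$.

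We split on the sign of $q-\lambda$.

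\emph{Case $q\ge\lambda$.} The second term is nonnegative. The first term satisfies $(\lambda/q)^t\ge Q^{-t}\ge 2^{-t/2}\ge2^{-t}$, because $Q\le\sqrt2$.

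\emph{Case $\lambda>q$.} The second term is negative and largest in absolute value at $y=1$, since $y^u\le1$. Using $\lambda^t\ge q^t$ in the first term and $\lambda^t\le Q^t$, $\lambda-q\le Q-1$ in the second, the bracket is at least
\[
 1-tQ^t(Q-1).
\]
It then remains to verify the scalar inequality
\[
 1-tQ^t(Q-1)\ge 2^{-t},\qquad Q=2^{1/p-1/2},\ t=p-1\in(0,1).
\]

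I expect this one-variable check to be the main obstacle. It is not tight, but a rigorous proof must hold uniformly in $t\in(0,1)$. I would first try the crude bounds $Q^t\le Q\le\sqrt2$ and $Q-1\le\sqrt2-1$, which give $tQ^t(Q-1)\le t(2-\sqrt2)\approx0.586\,t$. I would then compare this with $1-2^{-t}$. That function is concave in $t$, so it lies above its chord $t/2$ on $[0,1]$.

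However, $t/2<0.586\,t$, so this crude estimate does not suffice. The refinement is to keep the $t$-dependence of $Q$. Writing $Q=2^{(1-t)/(2(1+t))}$ shows $Q-1\le \tfrac{\ln 2}{2}\cdot\tfrac{1-t}{1+t}\cdot\sqrt2$. With $Q^t\le\sqrt2$, this gives $tQ^t(Q-1)\le t(1-t)\ln2/(1+t)$, a bound that vanishes at both endpoints. It then remains to compare with $1-2^{-t}\ge t\,2^{-t}\ln 2$, using the mean value theorem and the bound $2^{-s}\ge2^{-t}$ for $s\le t$. The claim reduces to $(1-t)/(1+t)\le2^{-t}$ on $[0,1]$. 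This holds because the left side is convex, equals $1$ at $0$ with slope $-2$, and vanishes at $1$, while $2^{-t}\ge1-t/2$ dominates its chord.
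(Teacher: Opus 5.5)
Your reduction is the paper's own. You have the same derivative identity $\partial_j\Phi_\lambda(x)=p\,x_j^{t}\bigl[(\lambda/q)^t+t\lambda^t y^{u}(q-\lambda)\bigr]$, the same split on the sign of $q-\lambda$, and, for $\lambda>q$, effectively the same lower bound $1-t\lambda^t(\lambda-q)\ge 1-tQ^t(Q-1)$.

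The last line of your scalar check contains a false step. Since $2^{-t}$ is convex, on $[0,1]$ it lies \emph{below} its chord, so $2^{-t}\le 1-t/2$, not $2^{-t}\ge 1-t/2$. You used the correct direction two sentences earlier, when you wrote $1-2^{-t}\ge t/2$. The inequality you actually need, $(1-t)/(1+t)\le 2^{-t}$, is still true. The right justification is the tangent line at $0$: $2^{-t}\ge 1-t\ln 2\ge 1-t\ge (1-t)/(1+t)$. With that replacement, the rest of your refinement is valid: the mean-value bound $Q-1\le\sqrt2\,\tfrac{\ln2}{2}\,\tfrac{1-t}{1+t}$, the inequality $Q^t\le\sqrt2$, and $1-2^{-t}\ge t\,2^{-t}\ln 2$.

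The detour is also unnecessary. Your crude attempt failed because it bounded $Q^t$ and $Q$ separately by $\sqrt2$. That ignores the fact that their product is exactly $Q^p=2^{1-p/2}<\sqrt2$. The paper uses this directly:
$Q^t(Q-1)=Q^p-Q^t\le Q^p-1=2^{1-p/2}-1\le\sqrt2-1<\tfrac12$.
Hence $tQ^t(Q-1)\le t/2$, and your own chord inequality $1-2^{-t}\ge t/2$ closes the case $\lambda>q$ at once.
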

\begin{proof}
Suppose $x_j>0$, and write $z=x_j/r(x)\in(0,1]$, $q=q(x)$. Differentiating~\eqref{eq:phi} gives
\begin{equation}\label{eq:gradient-formula}
 \frac{\partial_j\Phi_\lambda(x)}{p x_j^t}
 =\left(\frac\lambda q\right)^t
   +t\lambda^t(q-\lambda)z^{1-t}.
\end{equation}

The variable $z$ records the fraction of Euclidean length carried by coordinate $j$. It is between zero and one, irrespective of the sign of the other coordinate. The first term in~\eqref{eq:gradient-formula} is positive. The second can have either sign, so a single estimate that discards it would be invalid. Splitting according to whether $q$ is larger or smaller than $\lambda$ handles precisely this issue. In the second case we must bound how negative the correction can become, which is why the restriction $\lambda\le Q$ matters.

The normalization by $p x_j^t$ also clarifies the target: we need the remaining dimensionless expression to be at least $2^{-t}$. This removes the scale of $x$ from the calculation. At a zero coordinate the normalization is unavailable, but the original derivative inequality still makes sense and follows by continuity. The proof therefore treats positive coordinates algebraically and returns to the zero-coordinate boundary only after the estimate has been established.

If $q\ge\lambda$, the second term is nonnegative and the first is at least $Q^{-t}\ge2^{-t}$. If $q<\lambda$, the right-hand side is at least $1-t\lambda^t(\lambda-1)$, since $q\ge1$ and $z^{1-t}\le1$. Furthermore,
\[
 \lambda^t(\lambda-1)=\lambda^p-\lambda^t
 \le Q^p-1=2^{1-p/2}-1\le\sqrt2-1<\tfrac12.
\]
Thus~\eqref{eq:gradient-formula} is at least $1-t/2\ge2^{-t}$. The final inequality follows from the chord bound for the convex function $t\mapsto2^{-t}$ on $[0,1]$. Continuous differentiability handles $x_j=0$.

The two sign cases in this proof produce the same derivative threshold for different reasons. When $q\ge\lambda$, there is no negative correction, so a coarse positive lower bound on the first term suffices. When $q<\lambda$, the first term is at least one, and some of that extra margin can absorb the negative correction. The chord inequality $2^{-t}\le1-t/2$ converts this remaining linear margin into the uniform threshold $2^{-t}$. The proof does not establish equality throughout the range, and it does not need to: this derivative bound is an ingredient for extending a boundary inequality, rather than a separate claim about the sharpest possible derivative constant.

\end{proof}

The gradient bound does not require $\lambda=q(x)$, and this uniformity is important. In the later pointwise argument, $\lambda$ is fixed by the median, while $x$ ranges over all input points. The two norm ratios can differ in either direction. The proof of Lemma~\ref{lem:gradient} explicitly covers both possibilities and therefore permits integration while $q(x)$ varies along a coordinate path.

The remaining ingredient is a bound valid without any restriction on which side of a median threshold a point lies. A triangle inequality will supply it, but the coefficients must be checked carefully. The auxiliary function contains a negative term, so its nonnegativity and its radial lower bound cannot simply be assumed. The next lemma identifies a positive radial coefficient and shows that the weighted H\"older factor does not exceed the desired constant.

\begin{lemma}[Global estimate]\label{lem:global}
If $m\ne0$ and $\lambda=q(m)$, then for every $x\in\R^2$,
\begin{equation}\label{eq:global}
 \norm{x-m}_2^p\le2^{p-1}\bigl(\Phi_\lambda(x)+s(m)^p\bigr).
\end{equation}
\end{lemma}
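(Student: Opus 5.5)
The plan is to combine the Euclidean triangle inequality with a radial lower bound on $\Phi_\lambda$, and then to close the estimate with a weighted H\"older (power-mean) inequality. Write $R=r(m)>0$, so that $s(m)=\lambda R$ and $s(m)^p=\lambda^pR^p$. First, $\norm{x-m}_2\le r(x)+R$. Second, I will bound $\Phi_\lambda$ below by a positive multiple of $r(x)^p$. For $x\ne0$ the formula after~\eqref{eq:phi} gives $\Phi_\lambda(x)/r(x)^p=p\lambda^tq(x)-t\lambda^p$. This is increasing in $q(x)$, and $q(x)\ge1$, so
\[
 \Phi_\lambda(x)\ge c\,r(x)^p,\qquad c=c(\lambda):=p\lambda^t-t\lambda^p=\lambda^t(p-t\lambda).
\]
The same bound holds trivially at $x=0$. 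The coefficient is positive because $\lambda\le Q\le\sqrt2$ gives $p-t\lambda\ge1+t-t\sqrt2>0$; this is the ``positive radial coefficient'' announced before the lemma.

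Next, for positive weights $\alpha,\beta$, H\"older's inequality with exponents $p$ and $p/(p-1)$ gives
\[
 (r+R)^p\le\bigl(\alpha^{-1/t}+\beta^{-1/t}\bigr)^{t}\,(\alpha r^p+\beta R^p),\qquad t=p-1.
\]
I take $\alpha=c$ and $\beta=\lambda^p$. Then $\alpha r(x)^p+\beta R^p\le\Phi_\lambda(x)+s(m)^p$, and \eqref{eq:global} follows once the H\"older factor is at most $2^{t}=2^{p-1}$. That is, it suffices to prove the scalar inequality
\[
 g(\lambda):=\lambda^{-1}(p-t\lambda)^{-1/t}+\lambda^{-p/t}\le2\qquad\text{for }\lambda\in[1,Q].
\]
At $\lambda=1$ both terms equal one, so equality holds; this matches the fact that \eqref{eq:global} is an equality for $x=-m$ along an axis.

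The main obstacle is verifying $g\le2$ on the whole interval uniformly in $p\in(1,2)$. My first attempt will be to show that $g$ is nonincreasing on $[1,Q]$. Differentiating,
\[
 \frac{g'(\lambda)}{\lambda^{-1}}=-\frac1\lambda(p-t\lambda)^{-1/t}+(p-t\lambda)^{-1/t-1}-\frac pt\lambda^{-p/t}.
\]
At $\lambda=1$ this equals $-1+1-p/t<0$. For general $\lambda$ I would bound $(p-t\lambda)^{-1/t-1}$ using $p-t\lambda\ge1+t(1-\sqrt2)$ and compare it with the large negative term $(p/t)\lambda^{-p/t}$, which dominates because $\lambda^{-p/t}\ge Q^{-p/t}=2^{-(2-p)/(2t)}$.

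If monotonicity is awkward near $t\to0$, where the exponents $1/t$ blow up, I would use a fallback. Set $\lambda=e^{\sigma}$ with $0\le\sigma\le\log Q=(2-p)\log2/(2p)$, which is small. Then $\lambda^{-p/t}=e^{-p\sigma/t}$. Using $\log(p-t\lambda)\ge\log p-t\lambda/p-\dots$ in the form $(p-t\lambda)\ge1-t(\lambda-1)$, I get $(p-t\lambda)^{-1/t}\le e^{(\lambda-1)+O(t)}$. The first term is then at most $e^{-\sigma}e^{(e^\sigma-1)(1+O(t))}\approx1+O(\sigma^2)$, while the second term has decreased by roughly $1-e^{-p\sigma/t}$. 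Their sum stays at most $2$ because $Q$ is close to one when the second-order terms matter. I expect this elementary but delicate one-variable comparison, rather than the structural triangle-inequality/H\"older step, to consume most of the proof.
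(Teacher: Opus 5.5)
Your reduction is exactly the paper's: the triangle inequality, the radial bound $\Phi_\lambda(x)\ge c\,r(x)^p$ with $c=\lambda^t(p-t\lambda)$, and weighted H\"older with weights $c$ and $\lambda^p$. These steps leave the scalar inequality $g(\lambda)=\lambda^{-1}(p-t\lambda)^{-1/t}+\lambda^{-p/t}\le2$ on $[1,Q]$. That inequality is the real content of the lemma, and you do not prove it.

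Your primary route, showing that $g$ is nonincreasing, is false. Write $\lambda=1+\delta$, so that $p-t\lambda=1-t\delta$. Your derivative then simplifies to
\[
 g'(\lambda)=\frac p\lambda\Bigl[\frac\delta\lambda(1-t\delta)^{-p/t}-\frac1t\lambda^{-p/t}\Bigr],
\]
so $g'(\lambda)\le0$ exactly when $t\delta(1+\delta)^{1/t}\le(1-t\delta)^{p/t}$. Take $p=1.05$ and $\lambda=Q$, so $\delta\approx0.368$. The left side is about $9.7$ and the right side about $0.68$, so $g$ is increasing near $Q$. More generally, as $p\to1$ the term $\lambda^{-p/t}$ becomes negligible for fixed $\lambda>1$. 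Meanwhile the first term tends to $e^{\delta}/(1+\delta)$, which increases in $\delta$, and $Q\to\sqrt2$, so the interval does not shrink.

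Your fallback does not repair this. The quantity $\log Q=\frac{2-p}{2p}\log2$ tends to $\frac12\log2$, not to zero, as $p\to1$. The $O(t)$ and $O(\sigma^2)$ terms are never made uniform in $p$. So the sketch does not establish $g\le2$ on the whole parameter range.

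The missing idea is to bound each term by a $t$-free function of $\delta$, rather than trying to exploit the decay of $\lambda^{-p/t}$.

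For the first term, concavity of $y\mapsto y^t$ gives $(1-\delta)^t\le1-t\delta$. Hence $(1-t\delta)^{-1/t}\le(1-\delta)^{-1}$ for every $t\in(0,1)$. This is the $t=1$ endpoint, since $(1-t\delta)^{-1/t}$ increases with $t$; it is not the $t\to0$ limit $e^\delta$ that you expanded around. So the first term is at most $1/(1-\delta^2)$.

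For the second term, $p/t=1+1/t\ge2$ and $\lambda\ge1$ give $\lambda^{-p/t}\le(1+\delta)^{-2}$.

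The sum is then at most $2/\bigl((1-\delta)(1+\delta)^2\bigr)$. Finally, $(1-\delta)(1+\delta)^2=1+\delta(1-\delta-\delta^2)\ge1$, because $\delta(1+\delta)\le(\sqrt2-1)\sqrt2=2-\sqrt2<1$. This gives $g\le2$ uniformly in $p\in(1,2)$ and completes the argument.
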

\begin{proof}
Since $s(x)\ge r(x)$, we have $\Phi_\lambda(x)\ge c r(x)^p$, where
\[
 c=\lambda^t(p-t\lambda)
   =\lambda^t(1-t\delta)>0,\qquad
 \delta=\lambda-1\in[0,\sqrt2-1].
\]

The scalar $c$ is a uniform radial lower bound for the auxiliary function. The identity $p-t\lambda=1-t\delta$ makes positivity transparent: both $t$ and $\delta$ are strictly less than one. Its exact value is not the final approximation factor. Instead, it will be paired with the coefficient $\lambda^p$ already present in $s(m)^p=\lambda^p r(m)^p$. Weighted H\"older measures the cost of bounding the sum of the two Euclidean lengths using those two coefficients.

For ordinary equal coefficients, the familiar inequality is $(r+s)^p\le2^{p-1}(r^p+s^p)$. Here one coefficient may be smaller than one and the other larger than one. Applying the equal-coefficient inequality separately would lose that balance. The calculation below shows that their inverse powers still sum to at most two, so exactly the same factor $2^{p-1}$ remains available.

Concavity gives $(1-\delta)^t\le1-t\delta$, so
\[
 c^{-1/t}=\lambda^{-1}(1-t\delta)^{-1/t}
 \le\frac1{1-\delta^2}.
\]
Also $p/t=1+1/t\ge2$, and consequently
\begin{equation}\label{eq:holder-coefficient}
 c^{-1/t}+\lambda^{-p/t}
 \le\frac1{1-\delta^2}+\frac1{(1+\delta)^2}
 =\frac2{(1-\delta)(1+\delta)^2}\le2.
\end{equation}
Indeed, $\delta+\delta^2\le2-\sqrt2<1$, which implies
$(1-\delta)(1+\delta)^2=1+\delta(1-\delta-\delta^2)\ge1$.

Here the last rational comparison has a simple interpretation. The denominator $(1-\delta)(1+\delta)^2$ measures whether the gain from the coefficient greater than one compensates for the loss from the coefficient below one. On the permitted range of $\delta$, the expansion $1+\delta(1-\delta-\delta^2)$ is at least one, so that compensation is sufficient. The upper limit $\sqrt2-1$ is used only to certify this sign; the proof does not require the actual median's norm ratio to reach that limit.

To spell out the weighted H\"older step, use conjugate exponents $p$ and $p/t$, whose reciprocal sum is one because $t=p-1$. Write the sum of lengths as the inner product of the two vectors
\[
 \bigl(c^{1/p}r(x),\lambda r(m)\bigr)
 \quad\hbox{and}\quad
 \bigl(c^{-1/p},\lambda^{-1}\bigr).
\]
The $p$th power of the first vector's $\ell_p$ norm is $cr(x)^p+\lambda^p r(m)^p$. The $p$th power of the second vector's $\ell_{p/t}$ norm is $(c^{-1/t}+\lambda^{-p/t})^t$. Their product is the expression used below. This derivation also verifies the exponents on both coefficients, which cannot be replaced by the coefficients themselves. Both are positive, so the application is valid even if one of the lengths is zero.

By the triangle inequality and weighted H\"older inequality,
\begin{align*}
 \norm{x-m}_2^p
 &\le(r(x)+r(m))^p\\
 &\le(c^{-1/t}+\lambda^{-p/t})^t
       \bigl(cr(x)^p+\lambda^pr(m)^p\bigr)\\
 &\le2^t\bigl(\Phi_\lambda(x)+s(m)^p\bigr).
\end{align*}
For clarity, the middle inequality is H\"older with conjugate exponents $p$ and $p/t$, applied to $(c^{1/p}r(x),\lambda r(m))$ and $(c^{-1/p},\lambda^{-1})$.
\end{proof}

Lemma~\ref{lem:global} does not use any median condition. It is an inequality between two arbitrary vectors, with $\lambda$ chosen from one of them. Its role is limited but necessary: it handles points lying below both of the median's coordinate thresholds after the coordinate reflections. The triangle inequality is sufficient in that region because no negative indicator correction has to be paid. In the other regions, the boundary and derivative estimates provide the stronger information that the global bound lacks.

This separation of roles prevents a misleading interpretation of the proof. We are not expecting one triangle inequality to be sharp on every point of a worst-case profile. The argument uses different pointwise estimates in different regions and then exploits counting constraints after summation. Individual intermediate inequalities can have slack without preventing a sharp final constant. Sharpness is established globally by the lower-bound family, not by requiring simultaneous equality in every auxiliary estimate for one finite input.

\subsection{The pointwise inequality and median cancellation}

The next estimate is organized so that the positive constants are canceled by the median constraints. The reference point has already been translated to the origin; the point $m$ in the lemma need not itself be a median.

To see the form of the correction, start with the global estimate, whose right-hand side contains $a^p+b^p$. If we summed that estimate over all agents, we would acquire an unwanted term $n(a^p+b^p)$. The median constraints suggest exactly how to remove it. At least $n/2$ points satisfy $x_1\ge a$, so subtracting $2a^p$ on that half-plane cancels the full $na^p$ after summation. A corresponding subtraction on $x_2\ge b$ cancels $nb^p$.

The challenge is to justify these subtractions pointwise. Points in the intersection of the two half-planes must pay both corrections; points in neither half-plane pay neither. This gives the four regions in the proof. The numerical coefficient two in each correction comes from the one-half count in the definition of a median. It is not a free parameter tuned after observing the profile.

The indicator functions are discontinuous across their threshold lines, but the inequality is still a statement about ordinary continuous distances and the auxiliary function. We choose the value one at equality because the median counts include equality. The boundary lemma must therefore be strong enough to support the subtraction already on the threshold, not merely at points strictly beyond it. That is why it was proved before the region-by-region argument.

\begin{lemma}[Half-plane correction]\label{lem:pointwise}
Let $m=(a,b)$ with $a,b>0$, and set $\lambda=q(m)$. For every $x=(x_1,x_2)$,
\begin{equation}\label{eq:pointwise}
 \begin{split}
 \norm{x-m}_2^p\le2^{p-1}\bigl[\Phi_\lambda(x)+a^p+b^p
 &-2a^p\ind_{\{x_1\ge a\}}\\
 &-2b^p\ind_{\{x_2\ge b\}}\bigr].
 \end{split}
\end{equation}
The same inequality extends to $a,b\ge0$ with $m\ne0$.
\end{lemma}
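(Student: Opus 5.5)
The plan is to translate nothing further, since the reference point is already at the origin, and to prove \eqref{eq:pointwise} separately on the four regions cut out by the two threshold lines $x_1=a$ and $x_2=b$. The target in each region is the same. Let $\alpha=|x_2-b|$ and $\beta=|x_1-a|$. Then $\norm{x-m}_2^p=(\alpha^2+\beta^2)^{p/2}\le\alpha^p+\beta^p$ because $p\le2$. So it suffices to show that the bracket in \eqref{eq:pointwise} is at least $2^{1-p}$ times a suitable upper bound for $\alpha^p+\beta^p$.

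\emph{Neither indicator active} ($x_1<a$, $x_2<b$). The bracket equals $\Phi_\lambda(x)+a^p+b^p=\Phi_\lambda(x)+s(m)^p$. Lemma~\ref{lem:global} gives \eqref{eq:pointwise} at once.

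\emph{Exactly one indicator active}, say $x_1\ge a$ and $x_2<b$. The bracket is $\Phi_\lambda(x)-a^p+b^p$.
\begin{itemize}
\item First treat the boundary point $(a,x_2)$. Since $\Phi_\lambda$ is invariant under coordinate sign changes, $\Phi_\lambda(a,x_2)=\Phi_\lambda(a,|x_2|)$. Lemma~\ref{lem:boundary} with $v=|x_2|$ gives $\Phi_\lambda(a,x_2)-a^p+b^p\ge2^{1-p}(|x_2|+b)^p\ge2^{1-p}\alpha^p$.
\item Then move horizontally from $(a,x_2)$ to $x$. Along this segment the first coordinate stays $\ge a>0$, so Lemma~\ref{lem:gradient} applies with $j=1$. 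Integrating it gives an increase of $\Phi_\lambda$ of at least $2^{1-p}(x_1^p-a^p)$.
\item Superadditivity of $z\mapsto z^p$ gives $x_1^p-a^p\ge(x_1-a)^p=\beta^p$.
\end{itemize}
Adding the two pieces, the bracket is at least $2^{1-p}(\alpha^p+\beta^p)$, which is what we need. The case $x_2\ge b>x_1$ is identical after exchanging the coordinates. This is legitimate because $\Phi_\lambda$ and $\lambda=q(a,b)$ are symmetric under the swap, and the boundary lemma then uses $b>0$.

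\emph{Both indicators active} ($x_1\ge a$, $x_2\ge b$). The bracket is $\Phi_\lambda(x)-a^p-b^p$. Tangency at the median gives $\Phi_\lambda(m)=s(m)^p=a^p+b^p$, so the bracket vanishes at $m$. Integrate Lemma~\ref{lem:gradient} along the path $m\to(x_1,b)\to x$, on which both coordinates stay nonnegative. This yields
\[
\Phi_\lambda(x)-a^p-b^p\ge2^{1-p}\bigl[(x_1^p-a^p)+(x_2^p-b^p)\bigr].
\]
Superadditivity again gives a lower bound of $2^{1-p}(\beta^p+\alpha^p)$.

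\emph{Degenerate case} $a=0$ or $b=0$ with $m\ne0$. Use a limit argument. Take $m_k=(a_k,b_k)$ with positive entries converging to $m$, and shift the evaluation point to $x^{(k)}=x+(a_k-a,\,b_k-b)$. The shift preserves membership in each half-plane exactly, so the indicators do not change. Both sides are continuous in $(x,m)$ away from $m=0$: $\lambda=q(m)$ is continuous there and $\Phi_\lambda$ is jointly continuous. Letting $k\to\infty$ then gives the inequality.

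The main obstacle is the one-indicator region. There the negative correction $-2a^p$ must already be paid exactly on the threshold line, and only Lemma~\ref{lem:boundary} provides that. The care lies in reducing a possibly negative $x_2$ to $|x_2|$ via sign invariance, and in checking that $(|x_2|+b)^p$ dominates $|x_2-b|^p$. The gradient integration and the final $\ell_2$--$\ell_p$ comparison are routine.
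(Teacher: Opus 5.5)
Your proof is correct and follows the paper's argument. It uses the same four regions: Lemma~\ref{lem:global} in the first, Lemma~\ref{lem:boundary} plus horizontal integration of Lemma~\ref{lem:gradient} in the one-indicator regions, and path integration from $m$ in the fourth. The only deviations are minor. In the one-indicator region you dominate $\norm{x-m}_2^p$ by $|x_1-a|^p+|x_2-b|^p$ at the outset, whereas the paper uses that bound only in the fourth region and instead shows that $F(z)=\Phi_\lambda(z,x_2)-2^{1-p}\norm{(z,x_2)-m}_2^p$ is nondecreasing, which amounts to the same inequality. In the degenerate case you translate $x$ together with $m$ so the indicators never change, whereas the paper perturbs only $m$ and notes that the affected indicator term has coefficient $2\varepsilon^p\to0$.
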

\begin{proof}
First assume $a,b>0$. There are four regions.

\emph{Both coordinates below the thresholds.} If $x_1<a$ and $x_2<b$, both indicators vanish and Lemma~\ref{lem:global} gives~\eqref{eq:pointwise}.

No sign condition on $x_1$ or $x_2$ is imposed in this first region. Either coordinate can be very negative, so an argument based only on integrating positive-coordinate derivatives would not cover it. The global estimate was formulated for the entire plane precisely to avoid such an omission. Its reference term $a^p+b^p$ remains available in full because neither correction is active.

\emph{Only the first coordinate at or above its threshold.} Suppose $x_1\ge a$ and $x_2<b$. At $x_1=a$, sign invariance and Lemma~\ref{lem:boundary} imply
\begin{align}
 \Phi_\lambda(a,x_2)
 &=\Phi_\lambda(a,|x_2|)\notag\\
 &\ge a^p-b^p+2^{1-p}(|x_2|+b)^p\notag\\
 &\ge a^p-b^p+2^{1-p}|x_2-b|^p.
 \label{eq:boundary-extension}
\end{align}

The boundary calculation uses $|x_2|+b$ rather than $x_2+b$ so that it remains valid even if the second coordinate is negative. The triangle inequality on the real line then bounds $|x_2-b|$. At this stage we have established the required lower bound for the difference between the auxiliary function and a scaled distance power only at $x_1=a$. To extend it, we hold the second coordinate fixed and show that this difference cannot decrease as the first coordinate moves to the right.

Fix $x_2$ and define $F(z)=\Phi_\lambda(z,x_2)-2^{1-p}\norm{(z,x_2)-m}_2^p$ for $z\ge a$. If $z>a$, write $d=\norm{(z,x_2)-m}_2\ge z-a$. Because $p-2<0$, $(z-a)d^{p-2}\le(z-a)^{p-1}$. Lemma~\ref{lem:gradient} gives
\[
 F'(z)\ge p2^{1-p}\bigl[z^{p-1}-(z-a)^{p-1}\bigr]\ge0.
\]
At $z=a$ use continuity. Thus~\eqref{eq:boundary-extension} yields
$\Phi_\lambda(x)\ge a^p-b^p+2^{1-p}\norm{x-m}_2^p$, which is exactly~\eqref{eq:pointwise} in this region.

The negative exponent in the derivative of the distance is helpful here. Since the full distance is at least its first-coordinate displacement, raising it to $p-2<0$ reverses the comparison. The increase in the distance power is therefore no larger than the separable derivative $p(z-a)^{p-1}$. Lemma~\ref{lem:gradient} supplies enough growth in the auxiliary function to cover this increase. Keeping track of that reversal is essential; with the inequality in the other direction, monotonicity of $F$ would not follow.

\emph{Only the second coordinate at or above its threshold.} The case $x_1<a$, $x_2\ge b$ follows by exchanging the coordinates.

\emph{Both coordinates at or above the thresholds.} If $x_1\ge a$, $x_2\ge b$, then $\Phi_\lambda(m)=s(m)^p=a^p+b^p$. Integrate Lemma~\ref{lem:gradient} along the coordinate path from $(a,b)$ to $(x_1,b)$ to $(x_1,x_2)$. This gives
\begin{align*}
 \Phi_\lambda(x)-a^p-b^p
 &\ge2^{1-p}(x_1^p-a^p+x_2^p-b^p)\\
 &\ge2^{1-p}\bigl((x_1-a)^p+(x_2-b)^p\bigr)\\
 &\ge2^{1-p}\norm{x-m}_2^p.
\end{align*}

The first integrated estimate follows because an antiderivative of $p2^{1-p}z^{p-1}$ is $2^{1-p}z^p$. Along the horizontal segment the second coordinate stays equal to $b$, and along the vertical segment the first stays equal to $x_1$; the gradient lemma applies on both segments with the same fixed $\lambda$. To compare a difference of powers with a displacement power, write $x_1=a+(x_1-a)$ and use nonnegativity of both summands. The same argument applies to the second coordinate. This step would fail if a coordinate were below its threshold, which explains why this integration argument belongs specifically to the fourth region.

For the last norm comparison, set $d_1=x_1-a$ and $d_2=x_2-b$. The inequality $\norm{(d_1,d_2)}_p\ge\norm{(d_1,d_2)}_2$ for $p<2$, raised to power $p$, says exactly that $d_1^p+d_2^p$ dominates the Euclidean distance power. Thus the separable increments obtained from integration are sufficient to cover the nonseparable objective. We use the coordinate norm only for this comparison; the actual distance in the conclusion remains Euclidean.

The second step uses $(v+w)^p\ge v^p+w^p$ for nonnegative $v,w$; the last uses $p<2$. This establishes the fourth case.

To extend the result to $m=(0,b)$ with $b>0$, apply the proved inequality to $m_\varepsilon=(\varepsilon,b)$ and let $\varepsilon\downarrow0$. The first indicator term has absolute value at most $2\varepsilon^p$, the second is unchanged, and all other terms converge continuously. The case $m=(a,0)$ is symmetric. These are pointwise limits; $m_\varepsilon$ is not required to be a median.
\end{proof}

Two boundary issues are resolved by the formulation of Lemma~\ref{lem:pointwise}. First, points may lie exactly on one or both threshold lines; the corresponding cases already include equality. Second, the median may have a zero coordinate after the reference point has been translated to zero. The limiting argument changes the comparison point in the pointwise lemma, not the input profile. It does not assert that the perturbed comparison point is still a median. Only after passing to the limit do we apply the actual median count.

We now return to the finite profile. Up to this point, the lemmas have been inequalities for one vector or two vectors. The only step that uses the collective structure of the agents is the summation of the half-plane corrections. This is the bridge from geometric analysis to the order-statistic property of the mechanism. It is also the reason the proof does not depend on a detailed classification of median configurations.

\begin{proof}[Proof of Theorem~\ref{thm:reference}]
Translate $o$ to the origin and reflect coordinate axes if necessary so that $m=(a,b)$ has $a,b\ge0$. These transformations preserve both norms and the median property. If $m=0$, then $C_p(X;m)=A^{1/p}$ and $B\ge A$, which proves~\eqref{eq:reference}. Otherwise set $\lambda=q(m)$ and sum~\eqref{eq:pointwise} over $x_1,\ldots,x_n$. By~\eqref{eq:median-count},
\[
 \#\{i:x_{i1}\ge a\}\ge n/2,\qquad
 \#\{i:x_{i2}\ge b\}\ge n/2.
\]
The total contribution of the constant and indicator terms is therefore nonpositive.

For the first coordinate, that contribution is $a^p(n-2N_1)$, where $N_1=\#\{i:x_{i1}\ge a\}$. Since $N_1\ge n/2$, it is at most zero. The second coordinate gives the identical expression with $b$ and $N_2$. Overlap between the two half-planes causes no problem: a point in their intersection legitimately receives both corrections, exactly as established in the fourth region of the pointwise lemma. Repeated coordinates likewise cause no difficulty because the counts include all agents and include equality.
 Hence
\begin{equation}\label{eq:sum-phi}
 C_p(X;m)^p\le2^{p-1}\sum_i\Phi_\lambda(x_i)
 =2^{p-1}\bigl(p\lambda^{p-1}B-(p-1)\lambda^pA\bigr).
\end{equation}
For $\lambda>0$, the derivative of $h(\lambda)=p\lambda^{p-1}B-(p-1)\lambda^pA$ is
\[
 h'(\lambda)=p(p-1)\lambda^{p-2}(B-\lambda A).
\]

The derivative changes sign at $B/A$: it is positive below that value and negative above it. Thus the required bound is a direct one-variable maximization, not an assumption that the geometrically chosen $\lambda=q(m)$ happens to equal $B/A$. In general those quantities differ. We are simply bounding the value at the actual parameter by the largest value allowed on the positive half-line. Because $A>0$ and $B\ge A$, the maximizing point is well defined and positive.

Its global maximum is attained at $\lambda=B/A$, with value $B^p/A^{p-1}$. Applying this upper bound in~\eqref{eq:sum-phi} and taking $p$th roots proves the theorem.
\end{proof}

After this maximization, the dependence on the median's own norm ratio has disappeared. The remaining right-hand side uses only the reference point and the profile, and it is linear in $B$ after taking a $p$th root. This is the structural feature needed for rotation. A looser estimate such as replacing every $q(x_i)$ by its maximum $Q$ would recover a direction-free factor but destroy the angular average that the next section exploits.

The proof also explains the apparent mixture of geometric and combinatorial ingredients. The difficult inequalities justify a certificate for each point, while the median supplies only two counts. There is no assumed distribution of directions and no assumed symmetry in the upper-bound profile. The symmetry used later to demonstrate sharpness is a property of a chosen lower-bound family, not a hidden restriction on this theorem.

\begin{remark}[Sharpness of the deterministic coefficient]
The coefficient $2^{1-1/p}$ in Theorem~\ref{thm:reference} cannot be reduced. Take $x_1=(-1,0)$, $x_2=(1,0)$, $o=0$, and the lower coordinate median $m=(-1,0)$. Then $A=B=2$ and $C_p(X;m)=2$, giving equality in~\eqref{eq:reference}. This observation concerns the deterministic inequality; the rotated mechanism requires the separate tightness argument in Section~\ref{sec:lower}.
\end{remark}

\section{Averaging the Reference-point Inequality}\label{sec:rotation}

We now pass from the deterministic estimate to the randomized mechanism. No optimization over the angle-dependent facility location is needed.

The order of operations matters. We first fix the input profile and a single optimal reference point in the physical plane. We then express those same points in different coordinate systems. The mechanism's median changes with the coordinate system, but the reference point does not move in the physical plane, and its Euclidean distances to the agents remain constant. Theorem~\ref{thm:reference} is applicable separately in every coordinate system because it holds for every reference point. We can therefore average its right-hand side without needing an explicit formula for the median trajectory.

There is a useful distinction between rotating the coordinate axes and changing the instance. In the calculation below, the agents retain their physical positions. Their two numerical coordinates change because the basis changes. A fixed Euclidean vector consequently has constant length but a varying coordinate $\ell_p$ norm. The average of that latter norm is the elementary geometric quantity isolated by the next lemma. This is the only angle-dependent expression that remains after the deterministic theorem has been applied.

\begin{lemma}[Rotational average]\label{lem:average}
For a fixed $z\in\R^2$ and uniform $\Theta\in[0,2\pi)$,
\begin{equation}\label{eq:average}
 \E_\Theta\left( |\ip z{e_\Theta}|^p+|\ip z{f_\Theta}|^p\right)^{1/p}
 =\norm z_2\,\frac2\pi\int_0^{\pi/2}
 (\cos^p\theta+\sin^p\theta)^{1/p}\dd\theta.
\end{equation}
\end{lemma}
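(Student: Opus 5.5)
The proof reduces to a one-dimensional angular integral by writing $z$ in polar form and using that a uniform rotation of the axes shifts the relative angle uniformly. If $z=0$ both sides vanish, so I assume $z\ne0$ and write $z=\norm z_2(\cos\psi,\sin\psi)$. A direct computation then gives
\[
\ip z{e_\Theta}=\norm z_2\cos(\psi-\Theta),\qquad \ip z{f_\Theta}=\norm z_2\sin(\psi-\Theta).
\]
Factoring $\norm z_2$ out of the $p$-norm by homogeneity, the left-hand side of~\eqref{eq:average} becomes
\[
\norm z_2\,\E_\Theta\, g(\psi-\Theta),\qquad g(\phi)=(|\cos\phi|^p+|\sin\phi|^p)^{1/p}.
\]

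Next I use that $g$ is $2\pi$-periodic and that $\Theta$ is uniform on $[0,2\pi)$. Together these make $\psi-\Theta$ uniform modulo $2\pi$, so
\[
\E_\Theta\, g(\psi-\Theta)=\frac1{2\pi}\int_0^{2\pi}g(\phi)\dd\phi .
\]
In fact $g$ has period $\pi/2$. The shift $\phi\mapsto\phi+\pi/2$ sends $(|\cos\phi|,|\sin\phi|)$ to $(|\sin\phi|,|\cos\phi|)$, and the sum of $p$th powers is symmetric in the two entries, so $g$ is unchanged. The integral over $[0,2\pi)$ therefore equals four times the integral over $[0,\pi/2]$. On that interval both $\cos\phi$ and $\sin\phi$ are nonnegative, so the absolute values can be dropped. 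This gives the factor $\frac{4}{2\pi}=\frac2\pi$ multiplying $\int_0^{\pi/2}(\cos^p\theta+\sin^p\theta)^{1/p}\dd\theta$, which is exactly~\eqref{eq:average}.

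There is no real obstacle in this argument. The only points that need care are getting the signs right in the inner products, where the absolute values remove any ambiguity, and justifying the change of variables under periodicity. Measurability and integrability are immediate because $g$ is continuous and bounded by $\sqrt2$.
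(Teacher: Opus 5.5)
Your proposal is correct and follows the same route as the paper: polar form for $z$, rotated coordinates of absolute values $\norm z_2|\cos(\psi-\Theta)|$ and $\norm z_2|\sin(\psi-\Theta)|$, shift-invariance of the uniform angle modulo $2\pi$, and $\pi/2$-periodicity of the integrand. You additionally write out the inner-product computation and the normalization $4/(2\pi)=2/\pi$, which the paper leaves implicit.
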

\begin{proof}
For $z=0$ the identity is immediate. Otherwise write $z=\norm z_2(\cos\varphi,\sin\varphi)$. Its rotated coordinates have absolute values $\norm z_2|\cos(\Theta-\varphi)|$ and $\norm z_2|\sin(\Theta-\varphi)|$. A shift of a uniform angle preserves its distribution modulo $2\pi$. The integrand has period $\pi/2$, giving~\eqref{eq:average}.
\end{proof}

The identity in Lemma~\ref{lem:average} is independent of the direction of $z$. This does not mean that the coordinate norm is the same in every direction: for $1<p<2$, a vector pointing diagonally has a larger coordinate $\ell_p$ norm than an axis-aligned vector of the same Euclidean length. Rather, a full uniform rotation makes every initial direction encounter the same set of relative angles with the same weights. The factor $2/\pi$ normalizes integration over one quadrant, whose length is $\pi/2$.

Another point is that the lemma averages a norm itself. It does not first average its $p$th power and then take a root. The latter procedure would replace the mean of a concave function by that function of the mean, producing the larger moment bound discussed in Section~\ref{sec:comparison}. The strengthened deterministic theorem was designed precisely so that the norm in this lemma occurs linearly and can be averaged in its present form.

\begin{proof}[Upper bound in Theorem~\ref{thm:main}]
Fix a profile with positive optimum. Choose a minimizer $o$ and put $r_i=\norm{x_i-o}_2$ and $R_p=(\sum_i r_i^p)^{1/p}=\OPT_p(X)$. In every rotated coordinate system, Theorem~\ref{thm:reference} gives
\[
 C_p(X;m_\theta)\le
 \frac{2^{1-1/p}}{R_p^{p-1}}
 \sum_i r_i^{p-1}
 \left(|\ip{x_i-o}{e_\theta}|^p+|\ip{x_i-o}{f_\theta}|^p\right)^{1/p}.
\]
Taking expectations, exchanging the finite sum with the expectation, and using Lemma~\ref{lem:average}, we obtain
\begin{align*}
 \E C_p(X;m_\Theta)
 &\le\frac{2^{2-1/p}}{\pi R_p^{p-1}}
 \left[\int_0^{\pi/2}(\cos^p\theta+\sin^p\theta)^{1/p}\dd\theta\right]
 \sum_i r_i^p\\
 &=L(p)\OPT_p(X).
\end{align*}
The deterministic reference-point inequality uses only the median inequalities~\eqref{eq:median-count}. Thus the same upper bound applies to every measurable selection within the median intervals.
\end{proof}

The factor $r_i^{p-1}$ in the reference-point inequality is essential to this averaging step: the rotation average contributes another factor $r_i$, and the resulting sum is precisely the optimal $p$th-power cost. Although the intermediate coordinate norm depends on the angle, its coefficient and the denominator do not.

To see the cancellation explicitly, the numerator after averaging contains $\sum_i r_i^{p-1}r_i=R_p^p$, whereas the denominator is $R_p^{p-1}$. Their quotient is $R_p$, which is the benchmark cost itself. Thus no estimate involving the number of agents, the diameter of the instance, or the distribution of radial distances is needed at this stage. All of those features were already accommodated by the deterministic inequality. Agents located exactly at the reference point have zero weight and contribute zero, so no division by an individual distance is necessary.

The expectation and summation are interchangeable because the sum has finitely many terms, each bounded by $Q r_i$ before multiplication by its fixed weight. There is no limiting profile in the upper-bound proof. Similarly, the reference point is chosen before the random angle is drawn; an interchange of minimization and expectation is neither asserted nor required. The cost of the rotated median is bounded relative to the same optimum at every angle, which is stronger than bounding it relative to an angle-dependent comparator.

The median-selection statement should be read at the level of approximation. When a coordinate has a nontrivial median interval, every point of that interval satisfies the two half-sample inequalities used by the proof. Consequently, choosing different points in these intervals cannot invalidate the upper bound, provided the choice is measurable so that the expectation is defined. It does not follow that every such selection rule is strategyproof: the incentive claim in Section~\ref{sec:model} was for the specified order-statistic rule. Keeping these two claims separate avoids imposing an unnecessary tie-breaking restriction on the geometric theorem.

\subsection{A worked application and the slack in the bound}

The upper-bound argument applies without knowing the median as a function of the angle. A symmetric example in which that function is known helps distinguish what the theorem proves from what its intermediate expressions mean. Consider five agents at
\[
 (1,0),\quad(-1,0),\quad(0,1),\quad(0,-1),\quad(0,0).
\]
For every axis, their scalar projections occur in two opposite pairs together with zero. The middle of the five projections is therefore zero, including when some projections coincide. Both coordinate medians are zero for every orientation, and the mechanism always places the facility at the origin. In this example the randomization changes the coordinates used internally but not the final facility.

The origin is also optimal for the social objective. Indeed, the sum of the five distance powers is a convex even function of the facility location: reflecting the facility through the origin permutes the four nonzero agents and leaves the fifth fixed. Its value at the midpoint of $y$ and $-y$ is at most their common value. Thus the optimal $p$th-power cost is four, the cost is $4^{1/p}$, and the actual approximation ratio of this profile is one. This conclusion is valid for each $p$ in the parameter range, without a numerical optimization.

We can nevertheless apply the reference-point theorem to this profile and examine its right-hand side. Choose $o=0$. Four of the radial distances $r_i$ equal one and the fifth equals zero, so $A=4$. In axes at angle $\theta$, each of the four nonzero displacement vectors has the same coordinate norm
\[
 h_p(\theta)=\bigl(|\cos\theta|^p+|\sin\theta|^p\bigr)^{1/p}.
\]
Changing from a horizontal vector to a vertical one exchanges the absolute coordinate values, and changing to its opposite only changes signs. Consequently, $B=4h_p(\theta)$. Substitution in Theorem~\ref{thm:reference} gives
\[
 C_p(X;m_\theta)\le
 2^{1-1/p}\,4^{1/p}h_p(\theta).
\]
After division by the actual optimum $4^{1/p}$, the resulting bound is $2^{1-1/p}h_p(\theta)$. Its angular average is $L(p)$, even though the actual ratio is one. There is no contradiction: a universal sharp bound need not be an equality on every profile. This example makes visible the slack that the proof is allowed to have away from its worst-case family.

The example also separates three kinds of angle dependence. The Euclidean benchmark distances are fixed. The coordinate norms vary with the basis. The mechanism's output may or may not vary; here it does not. Our averaging argument requires only the first two facts and makes no assumption about the third. This is why it can be applied to irregular profiles for which a direct piecewise formula for the median would be cumbersome. The resulting bound remains correct even when the directional expression overestimates an unchanging output.

Why not improve the universal constant by exploiting that slack? Doing so would require an additional inequality that holds on every profile, including the asymmetric family in the next section. Symmetry supplies such information in the present example, but symmetry is not part of the model. The lower-bound family is designed so that the mechanism's output does vary with the angle in a way that realizes the same angular expression asymptotically. Hence the slack in this symmetric instance cannot be subtracted uniformly.

This calculation is also a practical guide for applying the theorem. One first chooses the comparator, computes Euclidean radii, and forms the weighted sum of coordinate norms. One then applies the deterministic inequality, retaining that sum in its mixed form. Only after this step does one average the coordinate norms and identify the benchmark cost. Replacing $h_p(\theta)$ by its maximum $Q$ before averaging would produce the larger factor $2^{1-1/p}Q=\sqrt2$, regardless of the fact that most orientations have a smaller coordinate norm. The same unnecessary loss would occur on an arbitrary profile if every directional term were bounded by its maximum separately.

Finally, the example illustrates why it is useful to keep the optimal reference point distinct from the median in the statement of the deterministic theorem, even when they happen to coincide here. On a general profile, the median is selected by ranks and the optimum by minimization of a distance objective. There is no reason for the two points to agree. The theorem relates their costs without asserting any shared first-order optimality condition. Its proof uses the median counts for one point and only the distances to the other.

\section{Matching Lower Bound}\label{sec:lower}

We give a complete lower-bound proof using the two-cluster-and-outlier geometry of \citet{barak2026} and \citet{chan2026}. The integral lower bound is already established by the latter work. The construction below is included to exhibit the finite-instance meaning of tightness and to specify the limiting argument.

An upper bound must cover every profile, whereas a matching lower bound needs only a carefully chosen family of profiles. The family will have two equally large clusters, separated by a fixed distance, and one increasingly distant outlier. The cluster multiplicities grow faster than the outlier's distance cost. This makes the outlier negligible for the optimal normalized social cost, while allowing it to determine the coordinate median through its position in the sorted order. The mechanism reacts to ranks rather than to the magnitudes of all distances, and that distinction is what the construction uses.

There are three parts to the argument. We first normalize the optimal cost without solving exactly for the optimal facility. We then calculate the coordinate medians, which is possible because all but one agent belong to two repeated locations. Finally, we average the limiting costs and justify passing the limit through the integral. These steps serve different purposes: symmetry controls the denominator, clipping controls the output, and a uniform bound controls the expectation. None can be replaced merely by a drawing of the limiting geometry.

Fix $p\in(1,2)$. For an integer $R\ge1$, let $N=R^4$ and form $X_R$ with $N$ agents at $A=(-1,0)$, $N$ agents at $B=(1,0)$, and one agent at $C=(0,R)$. There are $2N+1$ agents, so each coordinate median is unique.

The choice $N=R^4$ is convenient rather than essential. For the fixed range $1<p<2$, it guarantees $R^p/N\to0$ with substantial room to spare, while keeping all multiplicities integral. The proof would also work with integer multiplicities tending to infinity and satisfying that same vanishing condition. We retain the single explicit choice throughout so that every member of the family is an ordinary finite unweighted instance of exactly the mechanism being analyzed.

Odd population size removes any dependence on even-sample median conventions. Each scalar sample has a unique central order statistic, including at angles for which projections coincide. Thus the lower bound will match the upper bound for the specified mechanism and for every median-interval selection covered by the upper-bound theorem. The use of growing populations also clarifies the quantifiers: this is a worst-case ratio over all finite population sizes, rather than a claim that one fixed-size profile attains the constant exactly.

It is tempting to assume that the optimal facility is the origin because the two large clusters are symmetric. The outlier breaks that symmetry, so this assumption is not justified. Fortunately, an exact optimizer is unnecessary. Removing the outlier gives a lower bound on every facility's cost, and evaluating the full objective at the origin gives an upper bound on the optimum. The two bounds have the same leading term after normalization, which is all the ratio calculation needs.

\begin{lemma}[Optimal-cost normalization]\label{lem:opt-lower}
For this family,
\begin{equation}\label{eq:opt-sandwich}
 2N\le\OPT_p(X_R)^p\le2N+R^p,
 \qquad
 \frac{\OPT_p(X_R)}{(2N)^{1/p}}\longrightarrow1.
\end{equation}
\end{lemma}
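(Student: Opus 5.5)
The plan is to prove the two-sided estimate in \eqref{eq:opt-sandwich} directly, without identifying the optimal facility. The convergence then follows from $N=R^4$. The text before the lemma already suggests the route: drop the outlier for the lower bound, and evaluate at the origin for the upper bound.

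\emph{Upper bound.} Since $\OPT_p(X_R)^p\le C_p(X_R;0)^p$, it suffices to compute the cost at the origin. The $2N$ cluster agents at $A$ and $B$ are each at distance $1$, and the outlier $C$ is at distance $R$. Hence $C_p(X_R;0)^p=2N+R^p$.

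\emph{Lower bound.} Fix any $y=(y_1,y_2)\in\R^2$. The outlier contributes a nonnegative term, so
\[
 C_p(X_R;y)^p\ge N\bigl(\norm{y-A}_2^p+\norm{y-B}_2^p\bigr).
\]
I would show that the bracket is at least $2$. Convexity of $z\mapsto z^p$ gives
\[
 \norm{y-A}_2^p+\norm{y-B}_2^p\ge2\Bigl(\tfrac{\norm{y-A}_2+\norm{y-B}_2}{2}\Bigr)^p.
\]
The triangle inequality gives $\norm{y-A}_2+\norm{y-B}_2\ge\norm{A-B}_2=2$, so the right-hand side is at least $2\cdot1^p=2$. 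Minimizing over $y$ then yields $\OPT_p(X_R)^p\ge2N$.

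\emph{Limit.} Dividing the sandwich by $2N$ gives
\[
 1\le\frac{\OPT_p(X_R)^p}{2N}\le1+\frac{R^p}{2R^4}.
\]
Because $p<2<4$, the right-hand side tends to $1$. Taking $p$th roots, which is continuous, gives $\OPT_p(X_R)/(2N)^{1/p}\to1$.

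None of these steps is a real obstacle. The only point needing care is the lower bound, where one must avoid assuming that the optimizer lies at the origin. The convexity-plus-triangle argument handles this uniformly in $y$.
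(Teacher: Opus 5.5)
Your proof is correct and follows the paper's route. You drop the outlier for the lower bound, evaluate the cost at the origin for the upper bound, and use $R^p/(2N)=R^{p-4}/2\to0$. The one difference is how you bound the cluster term. You combine Jensen's inequality for $z\mapsto z^p$ with the triangle inequality $\norm{y-A}_2+\norm{y-B}_2\ge2$. The paper instead notes that $g(y)=\norm{A-y}_2^p+\norm{B-y}_2^p$ is convex and even, so $g(0)\le\frac12\bigl(g(y)+g(-y)\bigr)=g(y)$. Your version does not use the symmetry $B=-A$, and both arguments are equally short.
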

\begin{proof}
The function $y\mapsto\norm{A-y}_2^p+\norm{B-y}_2^p$ is convex and even, so it is minimized at zero, where its value is $2$. The two large clusters therefore contribute at least $2N$ at every facility location. Placing the facility at zero gives the upper bound $2N+R^p$. Finally, $R^p/N=R^{p-4}\to0$.
\end{proof}

For completeness, convexity and evenness in Lemma~\ref{lem:opt-lower} imply the stated minimization through the midpoint inequality. If $g(y)=\norm{A-y}_2^p+\norm{B-y}_2^p$, then $g(-y)=g(y)$ and $g(0)\le(g(y)+g(-y))/2=g(y)$. Multiplying by $N$ gives the cluster contribution $2N$. After dividing the resulting sandwich by $2N$, its upper endpoint is $1+R^{p-4}/2$, which tends to one. Taking the continuous $p$th root then yields the cost normalization in the lemma.

The mechanism can nevertheless choose a point far from the origin on the scale of the cluster separation. To understand why, consider a one-dimensional sample with $N$ copies of a value $l$, $N$ copies of a value $h\ge l$, and one additional value $z$. If $z<l$, the middle observation is $l$; if $l\le z\le h$, it is $z$; if $z>h$, it is $h$. This is precisely clipping to $[l,h]$. Applying that observation in the two rotated coordinates yields the next lemma without any approximation.

\begin{lemma}[Limiting output]\label{lem:limit-output}
The URCM output on $X_R$ satisfies $\norm{m_\theta}_2\le1$ for every angle. For every fixed $0<\theta<\pi/2$,
\begin{equation}\label{eq:output-limit}
 m_\theta\longrightarrow(\cos2\theta,\sin2\theta)
 \quad\text{as }R\to\infty.
\end{equation}
\end{lemma}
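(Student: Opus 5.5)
Both parts of the statement rest on the clipping observation made just before the lemma. In each rotated coordinate the sample consists of $N$ copies of $\ip A{e_\theta}=-\cos\theta$, $N$ copies of $\ip B{e_\theta}=\cos\theta$, and the single value $\ip C{e_\theta}=R\sin\theta$. Since there are $2N+1$ values, the median is the $(N+1)$st order statistic. It therefore equals the clip of the outlier's projection to the interval spanned by the two cluster projections. So we will first record
\[
 a_\theta=\clip\bigl(R\sin\theta;\,[-|\cos\theta|,|\cos\theta|]\bigr),\qquad
 b_\theta=\clip\bigl(R\cos\theta;\,[-|\sin\theta|,|\sin\theta|]\bigr),
\]
using $\ip A{f_\theta}=\sin\theta$, $\ip B{f_\theta}=-\sin\theta$, and $\ip C{f_\theta}=R\cos\theta$. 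This handles coincident projections as well, because clipping to a possibly degenerate interval is still the middle order statistic.

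The uniform bound then follows immediately. We have $|a_\theta|\le|\cos\theta|$ and $|b_\theta|\le|\sin\theta|$. Because $m_\theta=a_\theta e_\theta+b_\theta f_\theta$ with orthonormal $e_\theta,f_\theta$, it follows that $\norm{m_\theta}_2^2=a_\theta^2+b_\theta^2\le\cos^2\theta+\sin^2\theta=1$ for every angle and every $R$.

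For the limit, fix $0<\theta<\pi/2$, so that $\sin\theta,\cos\theta>0$. Once $R>\max\{\cot\theta,\tan\theta\}$, we have $R\sin\theta>\cos\theta$ and $R\cos\theta>\sin\theta$. Both clips then land at the upper endpoints, giving $a_\theta=\cos\theta$ and $b_\theta=\sin\theta$ exactly. Hence, for all sufficiently large $R$,
\[
 m_\theta=\cos\theta(\cos\theta,\sin\theta)+\sin\theta(-\sin\theta,\cos\theta)=(\cos2\theta,\sin2\theta).
\]
The convergence is in fact eventual equality.

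No step here is a real obstacle. The only care needed is in the identification of the middle order statistic with the clip: we check that $N$ values lie at each endpoint and one value is free, so that rank $N+1$ is the endpoint or the outlier according to its position. We also keep the sign conventions of $f_\theta$ straight, so that the second coordinate of the outlier is $+R\cos\theta$ and the clip goes to $+\sin\theta$.
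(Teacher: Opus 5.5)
Your proposal is correct and follows the paper's proof essentially step for step. You identify each rotated coordinate median as the outlier's projection clipped to the interval spanned by the two cluster projections, bound $a_\theta^2+b_\theta^2$ by $\cos^2\theta+\sin^2\theta=1$, and note that for $R>\max\{\cot\theta,\tan\theta\}$ both clips saturate at $\cos\theta$ and $\sin\theta$, which gives $(\cos2\theta,\sin2\theta)$ exactly; your remark that the convergence is eventual equality matches the paper's comment after the lemma.
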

\begin{proof}
For two endpoint values each repeated $N$ times and one additional value, the unique median is the additional value clipped to the interval between the endpoints. Thus the rotated coordinates of $m_\theta$ are
\begin{align*}
 a_\theta&=\clip(R\sin\theta,-|\cos\theta|,|\cos\theta|),\\
 b_\theta&=\clip(R\cos\theta,-|\sin\theta|,|\sin\theta|),
\end{align*}
where $\clip(z,l,h)=\min\{h,\max\{l,z\}\}$. It follows that
$a_\theta^2+b_\theta^2\le\cos^2\theta+\sin^2\theta=1$.
At any fixed interior first-quadrant angle, both outlier projections eventually exceed their upper endpoints, giving $a_\theta=\cos\theta$ and $b_\theta=\sin\theta$. Mapping back gives
\[
 \cos\theta\,e_\theta+\sin\theta\,f_\theta
 =(\cos2\theta,\sin2\theta).
\qedhere
\]
\end{proof}

The convergence in Lemma~\ref{lem:limit-output} has a particularly concrete meaning. At a fixed interior angle, both $\sin\theta$ and $\cos\theta$ are positive. Once $R$ exceeds both $\cot\theta$ and $\tan\theta$, the clipping constraints are saturated and the displayed limiting output is already the exact output at that angle. There is no small residual error in those coordinates. What prevents uniform saturation over the whole quadrant is that the required threshold tends to infinity near an endpoint.

Geometrically, the limiting physical output traces the upper unit semicircle as the basis angle ranges from zero to $\pi/2$. This follows from the double angle in $(\cos2\theta,\sin2\theta)$, not from any circular symmetry of the input profile. The two clusters remain fixed at opposite ends of a horizontal diameter. The trigonometric distances to those endpoints generate exactly the angular expression found in the upper bound. This agreement identifies the family as a candidate for tightness; the remaining normalization and convergence steps establish that it actually is tight.

The bound $\norm{m_\theta}_2\le1$ holds for all angles and every finite $R$, including the small angular regions where the clipping constraints have not saturated. It is therefore stronger than the pointwise limit alone. We will use it both to show that the outlier's normalized cost vanishes and to provide a single integrable bound for the entire family of angular cost functions.

\begin{proof}[Lower bound in Theorem~\ref{thm:main}]
For an odd-size sample, the unique scalar median commutes with sign changes. Increasing the basis angle by $\pi/2$ exchanges the two coordinates and changes one sign; after mapping back, the physical output is unchanged. Hence the cost on $X_R$ has period $\pi/2$, and its expectation is the average over $[0,\pi/2]$.

This periodicity is a property of the odd-sample mechanism under a change of basis. It need not be justified by a rotational symmetry of the profile, which the outlier would in fact destroy. Rotating the basis by a quarter turn only permutes the two coordinate medians and changes one coordinate sign. The unique median respects those operations, and undoing the basis change returns the same physical point. Thus averaging over one quadrant loses no part of the distribution relevant to this family.

The limiting output in Lemma~\ref{lem:limit-output} has distances $2\cos\theta$ and $2\sin\theta$ from $A$ and $B$, respectively. Define
\[
 Y_R(\theta)=\frac{C_p(X_R;m_\theta)}{(2N)^{1/p}}.
\]

The $p$th power of this normalized cost is the average of the two cluster distance powers, plus the outlier distance power divided by $2N$. In the limit, the first two terms are $(2\cos\theta)^p/2$ and $(2\sin\theta)^p/2$. Taking their sum to the power $1/p$ gives the factor $2^{1-1/p}$ in the next display. This is also why the cluster multiplicities disappear from the limiting expression: their common factor has been included in the normalization.

Since $\norm{m_\theta}_2\le1$, the outlier's contribution to $Y_R(\theta)^p$ is at most $(R+1)^p/(2R^4)$, which tends to zero uniformly in $\theta$. It follows that for almost every first-quadrant angle,
\begin{equation}\label{eq:normalized-limit}
 Y_R(\theta)\longrightarrow
 2^{1-1/p}(\cos^p\theta+\sin^p\theta)^{1/p}.
\end{equation}
Both cluster distances are at most $2$, and for $R\ge1$,
\[
 Y_R(\theta)^p\le2^p+\frac{(R+1)^p}{2R^4}
 \le2^p+2^{p-1}.
\]
This supplies an integrable bound independent of $R$.

The two requirements for dominated convergence are now explicit. Convergence holds at every interior angle and hence almost everywhere under the uniform angular distribution; the two endpoints have measure zero. Moreover, the last display bounds $Y_R$ itself by the constant $(2^p+2^{p-1})^{1/p}$ on the whole interval, for every $R\ge1$. Thus the possibly unsaturated regions near the endpoints cannot conceal a contribution that survives outside the pointwise limit. Their behavior is controlled by the same bound as the rest of the quadrant.
 Dominated convergence applied to~\eqref{eq:normalized-limit} yields
\[
 \frac{\E C_p(X_R;m_\Theta)}{(2N)^{1/p}}
 \longrightarrow\frac{2^{2-1/p}}\pi\int_0^{\pi/2}
 (\cos^p\theta+\sin^p\theta)^{1/p}\dd\theta=L(p).
\]
Together with Lemma~\ref{lem:opt-lower}, this proves
\[
 \frac{\E C_p(X_R;m_\Theta)}{\OPT_p(X_R)}\longrightarrow L(p).
\]
For every $c<L(p)$, some finite member of the family therefore has ratio greater than $c$. This establishes the matching lower bound and completes Theorem~\ref{thm:main}.
\end{proof}

The outlier is negligible in normalized social cost but determines which coordinate endpoint is selected for almost every angle in the limit. This separates its effect on the objective from its effect on the median. The construction exploits an order-statistic property of the rule, not strategic misreporting by an agent.

The final step from a limit to a worst-case lower bound is worth making explicit. Let $c<L(p)$ and choose a positive tolerance smaller than $L(p)-c$. Convergence of the finite-instance ratios implies that all sufficiently large integer values of $R$ have ratios within that tolerance of $L(p)$ and therefore greater than $c$. Every such profile has finitely many agents and finite coordinates. Consequently, no constant smaller than $L(p)$ can satisfy the approximation guarantee on all allowed profiles. Exact attainment by a finite profile is not needed for this conclusion.

The proof therefore matches the quantifiers of the upper bound without modifying the problem. It uses unweighted agents, Euclidean distance, the same social exponent, and a single uniformly chosen orientation for the mechanism. Neither fractional cluster weights nor a different distribution over angles is introduced in taking the limit. Repeated locations are legitimate reports in the model; multiplicity simply records the number of distinct agents sharing a location.

\section{The Exact Ratio and the Moment Bound}\label{sec:comparison}

The difference between $L(p)$ and $U(p)$ can be stated directly as a strict Jensen gap.

This comparison serves two purposes. First, it verifies that the upper bound proved here is numerically stronger throughout the open parameter interval, rather than merely a different representation of the previously stated bound. Second, it pinpoints the operation responsible for the difference. Both expressions depend on the same elementary angular function; they place its average and its $p$th root in different orders. Our proof retains the order appropriate to expected social cost.

The random variable in the following proof is an analytic device derived from a uniform angle. It is not an additional source of randomness in the mechanism and does not change the input profile. Its variation records the difference between directions close to the coordinate axes and directions close to the diagonals. Because that variation is nonzero for $1<p<2$, moving the concave root outside its expectation incurs a strict loss.

\begin{proposition}\label{prop:strict}
For every $1<p<2$, $L(p)<U(p)$. The integral formula extends continuously to $[1,2]$, with $L(1)=4/\pi$ and $L(2)=\sqrt2$.
\end{proposition}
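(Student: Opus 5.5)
We will write both constants as functionals of one random variable. Let $\Theta$ be uniform on $[0,\pi/2]$ and set $h_p(\Theta)=(\cos^p\Theta+\sin^p\Theta)^{1/p}$, so that $L(p)=2^{1-1/p}\,\E h_p(\Theta)$.

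\textbf{Step 1: express $U(p)$ through the same variable.} We will show that
\[
U(p)=2^{1-1/p}\bigl(\E h_p(\Theta)^p\bigr)^{1/p}.
\]
Note that $\E h_p^p=\E(\cos^p\Theta+\sin^p\Theta)=2\E\cos^p\Theta$, using the symmetry $\theta\mapsto\pi/2-\theta$. The Wallis/Beta integral gives
\[
\frac{2}{\pi}\int_0^{\pi/2}\cos^p\theta\,\dd\theta=\frac{\Gamma((p+1)/2)}{\sqrt\pi\,\Gamma(1+p/2)}.
\]
Hence $2^{1-1/p}(2G)^{1/p}=2G^{1/p}=U(p)$, where $G$ denotes the Gamma ratio. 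This is routine, and it is the only computation requiring care with constants.

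\textbf{Step 2: strict Jensen (Lyapunov) inequality.} Since $p>1$, the map $y\mapsto y^p$ is strictly convex on $[0,\infty)$. Therefore
\[
\E h_p\le(\E h_p^p)^{1/p},
\]
with equality only if $h_p(\Theta)$ is almost surely constant.

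For $1<p<2$ it is not constant. We have $h_p(0)=1$, while $h_p(\pi/4)=2^{1/p-1/2}>1$. Since $h_p$ is continuous, it differs from its mean on a set of positive measure, so the inequality is strict. Multiplying by $2^{1-1/p}$ gives $L(p)<U(p)$.

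The main point to handle carefully is this strictness. It reduces to non-constancy of a continuous function, so it is not difficult.

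\textbf{Step 3: endpoint values.} The integrand $(\cos^p\theta+\sin^p\theta)^{1/p}$ is jointly continuous in $(p,\theta)\in[1,2]\times[0,\pi/2]$ and bounded by $2$. Dominated convergence (or uniform continuity on the compact set) therefore makes $L$ continuous on $[1,2]$.

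At $p=1$ we get
\[
L(1)=\frac{2}{\pi}\int_0^{\pi/2}(\cos\theta+\sin\theta)\,\dd\theta=\frac{2}{\pi}\cdot2=\frac{4}{\pi}.
\]
At $p=2$ the integrand equals $1$, so
\[
L(2)=\frac{2^{3/2}}{\pi}\cdot\frac{\pi}{2}=\sqrt2.
\]
As a consistency check, these agree with $U(1)$ and $U(2)$, since Jensen is an equality at those endpoints.
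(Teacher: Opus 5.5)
Your proof is correct and follows essentially the same route as the paper: both use the Beta integral to write $L(p)$ and $U(p)$ as $2^{1-1/p}$ times, respectively, the mean and the $p$th-power mean of $(\cos^p\Theta+\sin^p\Theta)^{1/p}$, obtain strictness from Jensen plus non-constancy ($\theta=0$ versus $\theta=\pi/4$), and get the endpoints from joint continuity and direct evaluation. You apply strict convexity of $y\mapsto y^p$ to $h_p$, while the paper applies strict concavity of $z\mapsto z^{1/p}$ to $Z=h_p^p$; these are equivalent formulations of the same step.
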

\begin{proof}
Let $\Theta$ be uniform on $[0,\pi/2]$ and set $Z=\cos^p\Theta+\sin^p\Theta$. Then
\[
 L(p)=2^{1-1/p}\E Z^{1/p}.
\]
The beta integral gives
\[
 \E Z=\frac4\pi\int_0^{\pi/2}\cos^p\theta\dd\theta
 =\frac{2\Gamma((p+1)/2)}{\sqrt\pi\,\Gamma(1+p/2)},
\]
and hence $U(p)=2^{1-1/p}(\E Z)^{1/p}$. For $1<p<2$, $Z$ is not almost surely constant: its continuous defining function has value $1$ at zero and $2^{1-p/2}>1$ at $\pi/4$. Strict concavity of $z\mapsto z^{1/p}$ gives $L(p)<U(p)$.

The integrand in~\eqref{eq:L} is continuous in $(p,\theta)$ on $[1,2]\times[0,\pi/2]$ and is bounded there. The integral is therefore continuous in $p$. At $p=1$ its integral is $2$; at $p=2$ its integral is $\pi/2$. Substitution gives the endpoint values.
\end{proof}

The endpoint evaluation is a property of the formula. The endpoint approximation ratios themselves are established in the prior analyses of \citet{barak2026,chan2026}; the new theorem concerns the interior interval. The strict improvement in Proposition~\ref{prop:strict} holds throughout that interval, even though the two formulas meet at its boundaries.

The two endpoints make the source of equality transparent. At $p=1$, the root operation is linear, so moving it across an expectation creates no loss even though the angular expression varies. At $p=2$, the identity $\cos^2\theta+\sin^2\theta=1$ makes the angular expression constant, so the expectation creates no loss even though the root is concave. In the interior, neither reason for equality applies. This explains why agreement at both endpoints is compatible with a strict gap at every intermediate parameter.

It is also useful to distinguish a small numerical improvement from the logical content of an exact characterization. The size of $U(p)-L(p)$ does not determine whether a proof is needed: any positive gap leaves open which constant governs all profiles. Matching the lower bound answers that question for the specified mechanism and objective. At the same time, it does not imply a lower bound of $L(p)$ against every strategyproof mechanism. The construction is analyzed through the outputs of coordinate-wise median; a different rule could behave differently on the very same profiles.

The angular formula is directly evaluable by ordinary one-dimensional integration for any specified $p$. Numerical quadrature can illustrate the constant or check the consistency of a symbolic calculation, but it plays no role in establishing either inequality of the main theorem. The upper bound rests on inequalities valid for arbitrary finite profiles, and the lower bound rests on an explicit convergent family. This separation matters because samples of profiles or angles, however extensive, would not establish a universal approximation guarantee or its exact sharpness.

\section{Conclusion}

We determine the exact worst-case expected approximation ratio of uniformly rotated coordinate-wise median in the Euclidean plane for every $1<p<2$. The result matches the integral lower bound of Chan, Lin, and Wang. Its main ingredient is a reference-point inequality for deterministic coordinate-wise median that permits the rotation average to be taken directly at the level of social cost.

The analysis is complete for this fixed mechanism and parameter range. It does not determine the best achievable ratio among randomized strategyproof mechanisms. Two further questions concern the proof itself: whether an analogue of the reference-point inequality holds in higher dimensions, and which other objectives admit a sharp analysis by averaging a reference-point estimate. Both require additional arguments; the two-dimensional norm-ratio and half-plane estimates proved here do not by themselves establish such extensions.

The central methodological distinction is the information retained before averaging. A direction-free comparison of the two spatial norms is sufficient for a deterministic approximation bound, but it does not record how a fixed Euclidean vector is seen by a random coordinate system. The reference-point estimate keeps that directional dependence while arranging the radial coefficients to sum to the benchmark cost. The pointwise certificate and median counts justify the estimate; rotational invariance then evaluates its expectation. These are separate roles, and neither requires an explicit description of the median trajectory on an arbitrary instance.

The lower-bound family complements this perspective. Its trajectory can be calculated exactly after clipping, and its limit realizes the same angular expression. The outlier illustrates how order information can remain decisive even when its contribution to the normalized objective vanishes. Together, the two arguments explain both why the constant is an angular average and why a mean-of-powers analysis leaves a gap. Any extension of this method would need to reproduce these structural features, rather than merely replace the planar integral with a higher-dimensional average.

\section*{Disclosure of generative AI assistance}
Generative AI was used extensively in developing the mathematical arguments, carrying out computational checks, and drafting and revising this manuscript. No independent human verification or formal proof certification is claimed in this working version.

\bibliographystyle{ACM-Reference-Format}
\bibliography{references}
\end{document}